\newtheorem{lemma}{Lemma}
\newtheorem{theorem}{Theorem}
\renewcommand{\qedsymbol}{$\blacksquare$}
\begin{document}
\preprint{APS/123-QED}

\title{Near-Heisenberg-limited parallel amplitude estimation \\with logarithmic depth circuit}

\author{Kohei~Oshio}
\email{kohei.oshio@mizuho-rt.co.jp}
\affiliation{Mizuho Research \& Technologies, Ltd., 2-3 Kanda-Nishikicho, Chiyoda-ku, Tokyo, 101-8443, Japan}
\affiliation{Quantum Computing Center, Keio University, 3-14-1 Hiyoshi, Kohoku-ku, Yokohama, Kanagawa, 223-8522, Japan}

\author{Kaito~Wada}
\email{wkai1013keio840@keio.jp}
\thanks{K.O. and K.W. contributed equally to this work.}
\affiliation{Graduate School of Science and Technology, Keio University, 3-14-1 Hiyoshi, Kohoku-ku, Yokohama, Kanagawa, 223-8522, Japan}

\author{Naoki~Yamamoto}
\email{yamamoto@appi.keio.ac.jp}
\affiliation{Quantum Computing Center, Keio University, 3-14-1 Hiyoshi, Kohoku-ku, Yokohama, Kanagawa, 223-8522, Japan}
\affiliation{Department of Applied Physics and Physico-Informatics, Keio University, 3-14-1 Hiyoshi, Kohoku-ku, Yokohama, Kanagawa, 223-8522, Japan}


\begin{abstract}

Quantum amplitude estimation is one of the core subroutines in quantum algorithms. 
This paper gives a parallelized amplitude estimation (PAE) algorithm that simultaneously achieves near-Heisenberg scaling in the total number of queries and sub-linear scaling in the circuit depth, with respect to the estimation precision.
The algorithm is composed of a global GHZ state followed by separated low-depth Grover circuits optimized by quantum signal processing techniques; the number of qubits in the GHZ state and the depth of each circuit is tunable as a trade-off way, which particularly enables even near-Heisenberg-limited and logarithmic-depth algorithm for amplitude estimation.
We prove that this trade-off scaling is nearly optimal with use of the parallel quantum adversary method, against folklore on the impossibility of efficient parallelization in amplitude estimation. 
The proposed algorithm has a form of distributed quantum computing, which may be suitable for device implementation. 

\end{abstract}

\maketitle

\textit{Introduction.}---
Estimating unknown parameters in quantum systems is a central topic in quantum metrology~\cite{giovannetti2006quantum, giovannetti2011advances}. 
Many efficient estimation strategies have been developed in various settings; in particular, two major strategies to quantum-limited estimation are the \textit{parallel} and \textit{sequential strategies}, which roughly speaking, utilize large entanglement and long coherence time, respectively. 
The techniques in quantum metrology are powerful, and there has been growing interest in applying such techniques to the development of efficient algorithms for quantum computation scenario \cite{knill2007optimal, kimmel2015robust, wang2021minimizing, dutkiewicz2022heisenberg, ding2023even, ni2023low, wada2024quantum, oshio2024adaptive, wada2025heisenberg}.

In those estimation algorithms, Quantum Amplitude Estimation (QAE)~\cite{brassard2002quantum} is an essential component.
Because it can be applied to expectation value estimation for any observable, it has numerous applications such as chemistry~\cite{kassal2008polynomial, lin2020near, dong2022ground,PhysRevResearch.4.043210}, finance~\cite{rebentrost2018quantum, woerner2019quantum, stamatopoulos2020option}, and machine learning~\cite{wiebe2015quantum, wiebe2016quantum, kapoor2016quantum, kerenidis2019q}. 
Specifically, in QAE, we are given an $n$-qubit ($n \ge 2$) unitary operator $U_a$ (and $U_a^\dagger$) that encodes the target parameter $a \in [0, 1]$ as
\begin{equation}\label{eq:quantum_oracle_qae}
    U_a\ket{0}^{\otimes n}=\sqrt{1-a}\ket{\psi_0}\ket{0}+\sqrt{a}\ket{\psi_1}\ket{1},
\end{equation}
where $\ket{\psi_0}$ and $\ket{\psi_1}$ are unknown $(n-1)$-qubit quantum states. 
The goal is to estimate $a$ by measuring the output state of single or multiple quantum circuits that contain $U_a$ and $U_a^\dagger$. 
The performance of the QAE algorithm is evaluated by the relationship between the root mean squared estimation error (RMSE) $\varepsilon$ and the total number $N$ of queries to $U_a$ and $U_a^\dagger$. 
Notably, the conventional QAE algorithms~\cite{suzuki2020amplitude, grinko2021iterative, aaronson2020quantum,nakaji2020faster,intallura2023survey,labib2024quantum} achieve the Heisenberg-limited (HL) scaling $N = \mathcal{O}(1/\varepsilon)$ or the near-HL one $N = \tilde{\mathcal{O}}(1/\varepsilon)$ (where $\tilde{\mathcal{O}}$ suppresses logarithmic factors), over the classical scaling $\mathcal{O}(1/\varepsilon^2)$.
However, those QAE algorithms require applying $U_a$ and $U_a^\dagger$ sequentially on a single circuit; the total number of sequential queries of $U_a$ and $U_a^\dagger$ on a single circuit, which we call the depth, scales as $\mathcal{O}(1/\varepsilon)$, and this makes those QAE challenging to implement.

\begin{figure}[H]
    \centering
    \includegraphics[width=\columnwidth]{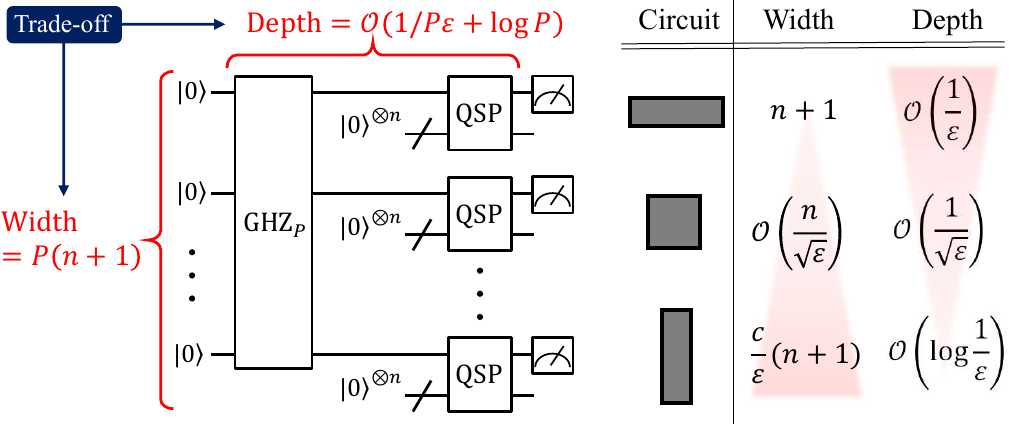}
    \caption{
        Quantum circuit of PAE, where $P \in [1, c/\varepsilon]$ represents the factor of parallelization with $c$ a constant. ``Width'' denotes the total number of qubits. $P$ is tuned to control the trade-off between total qubits and depth, as shown in several cases; Theorem~\ref{thm_PAE} in Introduction states the extreme log-depth case with $P=\lceil 1/\varepsilon \rceil$. 
        The ${\rm GHZ}_P$ operator prepares a $P$-qubit GHZ state, $(\ket{0}^{\otimes P}+\ket{1}^{\otimes P})/\sqrt{2}$.
        The QSP operator denotes an engineered phase shifter constructed by quantum signal processing (QSP), represented as $V_{\varphi, T}$ in the main text.
    }
    \label{fig_circuit_overview}
\end{figure}

Reducing circuit depth---even at the expense of additional qubits---is an effective approach for enhancing the implementability of quantum algorithms, which is thus a central paradigm in quantum algorithm synthesis~\cite{cleve2000fast, moore2001parallel, hoyer2005quantum, pham20132d, jiang2020optimal, zhang2024parallel, martyn2024parallel, wang2024tight, Quek2024multivariatetrace, cui2025unitary, mcardle2025fast}.
However, for QAE, few works pursue this direction~\cite{giurgica2022low,Rall2023amplitudeestimation,Vu2025low}. 
Refs.~\cite{giurgica2022low, Vu2025low} provide an example that achieves a depth of {${\mathcal{O}}(1/\varepsilon^{1-\kappa})$} with some constant $\kappa$, but it requires the total queries of $\tilde{\mathcal{O}}(1/\varepsilon^{1+\kappa})$, which is strictly bigger than the near HL scaling.
Overall, there has been no QAE algorithm achieving $N=\tilde{\mathcal{O}}(1/\varepsilon)$ for any $a \in [0, 1]$ with the use of quantum circuits whose maximal depth is sublinear in $1/\varepsilon$. 
In particular, there has been no log-depth QAE algorithm that achieves $N=\tilde{\mathcal{O}}(1/\varepsilon)$.

Intuitively, applying the quantum metrological parallel strategy to the QAE setting might work to solve the above-mentioned problems, because the Grover operator $Q$ in the QAE algorithms is a rotation gate with the angle $2\arcsin\sqrt{a}$. 
However, there is a tough obstacle; in the QAE problem, the eigenstates of $Q$ are not generally accessible unlike the conventional metrology setting, implying that the phase kick-back (from the system to the probe) technique cannot be directly applied.

In this paper, we leverage QSP~\cite{low2017optimal} in a specific way to overcome this issue, thereby presenting a new QAE algorithm---parallel amplitude estimation (PAE)---that achieves the desirable scaling in both the queries and the circuit depth; the following theorem is a special case achieving the log-depth circuit.

\begin{theorem}[Parallel amplitude estimation; log-depth case]
    Let $\varepsilon\in (0,1)$. 
    There exists a quantum algorithm that estimates $a \in [0, 1]$ encoded in $U_a$ 
    within the RMSE $\varepsilon$, using $N=\mathcal{O}(\varepsilon^{-1}\log(1/\varepsilon))$ queries to $U_a$ and $U_a^\dagger$ in total and $\lceil 1/\varepsilon \rceil(n+1)$-qubit quantum circuits with maximum circuit depth of $\mathcal{O}(\log(1/\varepsilon))$.
\end{theorem}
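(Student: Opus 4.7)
The plan is to realize the generalized structure sketched in Fig.~\ref{fig_circuit_overview} in the extreme parallelization regime $P=\lceil 1/\varepsilon\rceil$. Concretely, my approach has three ingredients: (i) a shallow preparation of a $P$-qubit GHZ state that is used as a quantum ``clock'' with resolution $\sim 1/P$; (ii) $P$ mutually disjoint $(n+1)$-qubit blocks that each carry out a QSP-based nonlinear phase shifter $V_{\varphi,T}$ of polynomial degree $T$, constructed from $\mathcal{O}(T)$ alternating calls to $U_a$ and $U_a^\dagger$; and (iii) a constant-depth collective measurement whose statistics are inverted to produce the estimate $\hat a$.

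First, I would invoke the QSP toolbox to construct $V_{\varphi,T}$ so that, when controlled by a single GHZ qubit, it imprints a phase of the form $p_T(\theta)\approx 2T\theta$ (with $\theta=\arcsin\sqrt{a}$) onto the ``good'' subspace generated by $U_a\ket{0}^{\otimes n}$, using $\mathcal{O}(T)$ queries and depth $\mathcal{O}(T)$. The key point is that QSP lets us engineer this polynomial response directly, without ever diagonalizing the Grover operator $Q$—this is precisely the obstacle noted in the introduction for a naive GHZ--phase-kickback scheme. Stacking $P$ copies in parallel, each controlled by one of the GHZ qubits, the overall state becomes, up to exponentially small QSP-approximation error, proportional to $\ket{0}^{\otimes P}\ket{\Phi_0}+e^{iPp_T(\theta)}\ket{1}^{\otimes P}\ket{\Phi_1}$, so the accumulated phase scales as $PT\cdot\theta$.

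Next, a constant-depth read-out (Hadamards on the $P$ GHZ qubits together with a fixed projection inside each block) yields a probability distribution that depends sinusoidally on $PT\theta$. I would then estimate $\theta$—and hence $a$—from $\mathcal{O}(1)$ shots by maximum-likelihood or Bayesian inference. Setting $T=\Theta(\log(1/\varepsilon))$ and $P=\lceil 1/\varepsilon\rceil$ yields Fisher information $\Omega((PT)^2)$ per shot, so the RMSE is $\mathcal{O}(1/(PT))\le\varepsilon$, while the depth remains $\mathcal{O}(\log P+T)=\mathcal{O}(\log(1/\varepsilon))$ (GHZ preparation contributes $\mathcal{O}(\log P)$ via a CNOT tree and QSP contributes $\mathcal{O}(T)$) and the width stays $\lceil 1/\varepsilon\rceil(n+1)$.

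The main obstacle I expect is the periodicity ambiguity inherent in the high-frequency signal $\sin(PT\theta)$: many candidate values of $\theta$ produce identical outcome statistics. The standard remedy is to bootstrap via a logarithmic cascade of PAE rounds with exponentially increasing $PT$, so that each round refines the interval around $\theta$ inherited from the previous one. Showing that this cascade can be executed while keeping the total query count at $\mathcal{O}(\varepsilon^{-1}\log(1/\varepsilon))$ and simultaneously keeping the per-circuit depth at $\mathcal{O}(\log(1/\varepsilon))$—rather than accumulating depth across rounds—is the delicate quantitative step, and is what produces the logarithmic overhead appearing in the theorem's query bound.
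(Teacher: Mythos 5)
Your overall architecture (shallow GHZ preparation, $P$ disjoint QSP blocks, a cascade of rounds with exponentially increasing $PT$ to resolve the phase ambiguity) matches the paper's, and you correctly flag the periodicity problem and its standard remedy. But there are two genuine gaps. The first is the choice of QSP target function. You propose imprinting a phase $p_T(\theta)\approx 2T\theta$, i.e.\ a response that is \emph{odd} in the Grover eigenphase. This does not work: the accessible input state $\ket{0}^{\otimes n}=(\ket{Q_+}+\ket{Q_-})/\sqrt{2}$ is an equal superposition of the two \emph{unknown} eigenstates with eigenphases $\mp 2\theta$, so an odd polynomial response gives $\ket{Q_+}$ and $\ket{Q_-}$ opposite phases and no clean kickback onto the GHZ register — you are back at the obstacle the introduction describes. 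The fix (and the central technical idea of the paper) is to engineer an \emph{even, smooth} function of the eigenphase so that the two eigenvalues become degenerate on the Grover plane: the paper implements $e^{-iT\cos(2\sigma\theta)}$ via the Jacobi--Anger expansion, which costs only $L=\mathcal{O}(T+\log(1/\varepsilon_{\rm oc}))$ queries, and then estimates $\varphi=2\cos 2\theta=2(1-2a)$ rather than $\theta$ itself. The even surrogate $T|2\theta|$ that would give your linear phase is non-smooth at the origin and cannot be approximated at this degree, so your stated depth $\mathcal{O}(T)$ for a phase $\propto T\theta$ is not achievable by the QSP toolbox you invoke.

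The second gap is quantitative. Citing Fisher information $\Omega((PT)^2)$ and ``$\mathcal{O}(1)$ shots of maximum likelihood'' does not yield an RMSE upper bound: the Cram\'er--Rao inequality is a lower bound, and with a periodic likelihood and constantly many shots the MLE does not attain it (this is exactly the ambiguity you acknowledge). The paper instead uses a non-asymptotic robust-phase-estimation bound (Lemma~\ref{lm_RPE_RMSE}) with explicit shot schedules $\nu_k$ over $K=\mathcal{O}(\log(1/\varepsilon))$ rounds. You also omit the error-accumulation analysis across the $P$ parallel blocks: the deviation of $V_{\varphi,T}^{\otimes P_k}$ from the ideal operator grows linearly in $P_k$, so each block must be accurate to $\mathcal{O}(\beta/P_k)$, forcing degree $L_k=\mathcal{O}(T_k+\log(P_k/\beta))$. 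This — not the cascade bookkeeping — is the actual source of the $\log(1/\varepsilon)$ overhead in both the depth and the query count. Finally, in the log-depth regime the exponential growth of $M_k=P_kT_k$ must be allocated entirely to $P_k$ (with $T_k=\mathcal{O}(1)$), since any growth in $T_k$ shows up directly in the per-circuit depth; your proposal of $T=\Theta(\log(1/\varepsilon))$ conflates the Hamiltonian-simulation time $T$ with the polynomial degree $L$ and leaves this allocation unspecified.
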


\noindent
That is, PAE resolves the above-mentioned open problem;
PAE can achieve the near HL scaling, $N=\tilde{\mathcal{O}}(1/\varepsilon)$, using quantum circuits with exponentially shallow depth $\mathcal{O}(\log (1/\varepsilon))$.
A resource comparison with conventional QAE~\cite{brassard2002quantum, suzuki2020amplitude, grinko2021iterative, giurgica2022low} is given in Supplemental Material (SM) Sec.~\ref{sec_comparison_QAE}.

Theorem~\ref{thm_PAE} can be generalized (the statement will be shown later), and Fig.~\ref{fig_circuit_overview} depicts the circuit of that general PAE algorithm. 
We can freely choose the parallelization factor $P$ in $[1, \mathcal{O}(1/\varepsilon)]$, and the resulting depth becomes $\mathcal{O}(1/(P\varepsilon)+\log P)$.
This depth scaling seems to be inconsistent with the previous lower bound $\Omega(1/(\varepsilon \sqrt{P}))$ in a parallel approximate counting problem~\cite{burchard2019lower}, which can be solved by PAE. 
However, we point out that the original derivation of this previous bound is incorrect. 
We then derive the corrected lower bound of query depth with $1/P$ dependence in a parallel approximate counting problem via the parallel quantum adversary method; see Theorem~\ref{thm:optimality_PAE} or a more general result in Appendix~\ref{sec_parallel_query}. 
As a result, we prove that the PAE algorithm can solve this problem and essentially matches the corrected lower bound. 
We also mention the consistency of PAE with the impossibility of efficient parallelization in quantum search at Appendix~\ref{sec_parallel_query}.

The notable parallel structure in Fig.~\ref{fig_circuit_overview} indeed comes from the parallel strategy in quantum metrology.
This represents an important feature of PAE; a (large) entanglement between multiple systems is needed only at the beginning of the circuit for preparing the $P$-qubit GHZ state $\ket{{\rm GHZ}_P}=(\ket{0}^{\otimes P}+\ket{1}^{\otimes P})/\sqrt{2}$.
After this, the circuit has a completely separable structure including the final measurement. 
This indicates that our method can be executed in parallel using multiple $\mathcal{O}(n)$-qubit quantum computers with the pre-shared entangled state $\ket{{\rm GHZ}_P}$, which can be generated with a logarithmic depth in $P$~\cite{cruz2019efficient, mooney2021generation}.
For this reason, our method is suitable for device implementation, especially in a form of distributed quantum computing~\cite{cirac1999distributed, yimsiriwattana2004generalized, van2006distributed, beals2013efficient, caleffi2024distributed, barral2025review}.

\textit{Parallel strategy in quantum metrology.}---
The standard problem addressed by the parallel strategy \cite{giovannetti2006quantum} is the estimation of an unknown phase $\varphi$ embedded in a unitary operator $U_{\varphi} := e^{i \varphi H}$. 
The crucial assumption is that the corresponding eigenstate of Hamiltonian $H$ can be prepared, i.e., $U_{\varphi}\ket{\varphi} = e^{i\varphi}\ket{\varphi}$. 
A canonical procedure of the parallel strategy is that we first prepare $\ket{{\rm GHZ}_P}$ together with $\ket{\varphi}^{\otimes P}$ and then apply the controlled-unitary ${\rm c}U_{\varphi}=\ket{0}\bra{0}\otimes \bm{1} + \ket{1}\bra{1}\otimes U_\varphi$ in parallel:
\begin{align}
    \label{eq_signal_multiplication}
    {\rm c}U_{\varphi}^{\otimes P} \ket{{\rm GHZ}_P} \ket{\varphi}^{\otimes P}
       = \frac{\ket{0}^{\otimes P}+e^{iP\varphi}\ket{1}^{\otimes P}}{\sqrt{2}}
              \ket{\varphi}^{\otimes P}.
\end{align}
Thus, the phase $\varphi$ is effectively kick-backed with multiplicative factor $P$, enabling the quantum-enhanced estimation of $\varphi$ to achieve the HL scaling in $P$ \cite{giovannetti2006quantum}. 
Note again that the above operation is doable if $\ket{\varphi}$ is available, while, if not, the possibility of doing a similar phase kick-back technique is non-trivial.
This is the main reason why the direct application of the parallel strategy to the QAE problem is a significant challenge. 
Below we describe this fact in detail.

\textit{Challenges of the parallel strategy for amplitude estimation.}---
In the QAE problem, the following Grover operator has an important role:
\begin{align}
    \label{eq_def_Q}
    Q := U_0 U_a^{\dagger} U_f U_a,
\end{align}
where $U_0 := 2\ket{0}^{\otimes n}\bra{0}^{\otimes n}-\bm{1}^{\otimes n}$, $U_f := 2 \times \bm{1}^{\otimes n-1} \otimes \ket{0} \bra{0}-\bm{1}^{\otimes n}$, and $\bm{1}$ is an identity operator.
$Q$ acts as $Q\ket{0}^{\otimes n} = \cos{2\theta}\ket{0}^{\otimes n} + \sin{2\theta}\ket{\psi}$, where $\theta := \arcsin{\sqrt{a}}$ and $\ket{\psi}$ is a quantum state orthogonal to $\ket{0}^{\otimes n}$~\cite{low2019hamiltonian,uno2021modified}. 
In the subspace spanned by $\ket{0}^{\otimes n}$ and $\ket{\psi}$, called ``Grover plane'', $Q$ functions as a rotation $e^{-i2\theta \overline{Y}}$ for the Pauli $\overline{Y}$ defined in this subspace. 
$Q$ has the eigenstates $\ket{Q_\pm} := (\ket{0}^{\otimes n} \pm i \ket{\psi})/\sqrt{2}$, which satisfy
\begin{align}
    \label{eq_operation_Q}
    Q\ket{Q_\pm} =  e^{\mp 2i\theta}\ket{Q_\pm}.
\end{align}
To realize the parallel strategy in QAE, we consider the controlled Grover operator ${\rm c}Q := \ket{0}\bra{0}_b \otimes \bm{1}_s + \ket{1}\bra{1}_b \otimes Q$, where $b$ and $s$ are indices corresponding to the ancilla qubit and the $n$-qubit system.
If the input state $\ket{{\rm GHZ}_P}_b \otimes \ket{Q_\sigma}_s^{\otimes P}$ ($\sigma=\pm$) can be prepared, applying ${\rm c}Q^{\otimes P}$ results in a signal multiplication similar to Eq.~\eqref{eq_signal_multiplication}.
However, in QAE, only the black-box operation $Q$ (or $U_a$ and $U_a^\dagger$) is given, and the eigenstates $\ket{Q_\pm}$ are generally unknown, meaning that the phase kick-back technique with ${\rm c}Q$ cannot be directly applied.
There are two previous approaches for addressing this issue: preparing $\ket{Q_\pm}$ assuming a sufficiently large amplitude~\cite{knill2007optimal}, or generating a particularly structured $\ket{Q_\pm}$~\cite{braun2022error}.
Unlike these approaches, we design a general and efficient parallel estimation method for arbitrary $a \in [0,1]$ and black boxes $U_{a},U_{a}^\dagger$.

\textit{Modified phase shifter via QSP.}---
To avoid preparing unknown states, we convert ${\rm c}Q$ into an engineered phase shifter which encodes the target parameter $a$ into the relative phase between {\it known} eigenstates.
The key idea of our approach is to make the eigenphases of $Q$ degenerate in the Grover plane, a technique that has also been employed in other contexts~\cite{low2017optimal, low2019hamiltonian}. 
Now, ${\rm c}Q$ can be expressed as 
\begin{align}
\label{eq_cQ expression}
    {\rm c}Q =  \sum_{\sigma}e^{-i\sigma\theta}
                            \begin{pmatrix}
                                e^{i\sigma\theta}&0\\
                                0&e^{-i\sigma\theta}
                            \end{pmatrix}_b
                            \otimes \ket{Q_{\sigma}}\bra{Q_\sigma}_s,
\end{align}
where $\sigma\in\{+,-\}$ and we omit terms acting outside the Grover plane.
Suppose we have an operation to transform the eigenphases $\sigma \theta$ to 
$h(\sigma \theta) = -T\cos(2\sigma \theta)$ and remove the global phase; then $cQ$ becomes
\begin{align*}
    &\sum_{\sigma}
        \begin{pmatrix}
            e^{-iT\cos(2\sigma\theta)}&0\\
            0&e^{iT\cos(2\sigma\theta)}
        \end{pmatrix}_b
        \otimes \ket{Q_{\sigma}}\bra{Q_\sigma}_s \nonumber \\
    & = 
        \begin{pmatrix}
            e^{-iT\varphi/2}&0\\
            0&e^{iT\varphi/2}
        \end{pmatrix}_b
        \otimes \sum_{\sigma} \ket{Q_{\sigma}}\bra{Q_\sigma}_s,
\end{align*}
where $T$ represents a tunable time duration and $\varphi := 2\cos{2\theta} = 2(1-2a)$. 
Hence, this procedure results in the following transformation: 
\begin{equation}
    \label{eq_def_Vphi}
    {\rm c}Q\mapsto \widetilde{V}_{\varphi, T}:=\begin{pmatrix}
        e^{-iT\varphi/2}&0\\
        0&e^{iT\varphi/2}
    \end{pmatrix}_b \otimes \overline{\bm{1}}_s,
\end{equation}
where $\overline{\bm{1}}_s$ is the identity operator on the Grover plane, and terms acting outside the Grover plane are omitted. 
Note that $\overline{\bm{1}}_s = \sum_{\sigma \in \{ +, - \}} \ket{Q_\sigma}\bra{Q_\sigma}_s$, because $\ket{Q_+}$ and $\ket{Q_-}$ form an orthogonal basis on the Grover plane.
Consequently, after preparing an arbitrary state, particularly a \textit{known} state such as $\ket{0}_s^{\otimes n}$ on the Grover plane, 
$\widetilde{V}_{\varphi, T}$ acts as the relative phase shifter of $T\varphi$ for any state in $b$.

The procedure described above can be approximately realized using QSP~\cite{low2017optimal, low2017quantum, martyn2021grand}, which is a general method for performing polynomial transformations on operator eigenvalues.
In our setting, the target operator is ${\rm c}Q$, and we focus only on the eigenvalues of $\ket{Q_\pm}$, whereas QSP transforms all eigenvalues.
Moreover, while standard applications of QSP require post-selection, our construction does not involve any post-selection.
Appendix~\ref{sec_operator_transformation} outlines the construction of the approximating unitary $V_{\varphi, T}$; details are in SM Sec.~\ref{sec_operator_transformation_detail}.
To quantify the resource requirements for this transformation, we introduce the following Lemma~\ref{lm_operator_transformation}:

\begin{lemma}
[Query complexity for constructing $V_{\varphi, T}$]
\label{lm_operator_transformation}
For any oracle conversion error $\varepsilon_{\rm oc}\in (0, 1)$ and 
any $j \in \{0, 1\}$,
there exists a quantum algorithm that constructs an operator $V_{\varphi, T}$ such that 
\[
\Big\lVert \left( V_{\varphi, T} - \widetilde{V}_{\varphi, T} \right) \ket{j}_b \ket{0}^{\otimes n}_s \Big\rVert < \varepsilon_{\rm oc},
\]
using ${\rm c}Q$ and ${\rm c}Q^{\dagger}$ a total of $L = \mathcal{O}(T + \log(1/\varepsilon_{\rm oc}))$ times.
\end{lemma}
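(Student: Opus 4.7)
The plan is to invoke the standard QSP machinery to apply a polynomial transformation to the eigenphases of ${\rm c}Q$ in the Grover plane, and then verify that on states of the form $\ket{j}_b\ket{0}^{\otimes n}_s$ the resulting unitary approximates $\widetilde{V}_{\varphi,T}$ within the required error.

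First, I would put ${\rm c}Q$ into a form to which QSP applies cleanly. The decomposition given in the main text shows that on each $Q$-eigenspace $\ket{Q_\sigma}_s$ within the Grover plane, ${\rm c}Q$ acts on the ancilla $b$ as $e^{-i\sigma\theta}\,e^{i\sigma\theta Z_b}$ (up to a global scalar). Since $\ket{0}^{\otimes n}_s=(\ket{Q_+}_s+\ket{Q_-}_s)/\sqrt{2}$ lies entirely in the Grover plane, ${\rm c}Q$, ${\rm c}Q^{\dagger}$ and any polynomial in them preserve the four-dimensional subspace spanned by $\{\ket{j}_b\ket{Q_\sigma}_s:j\in\{0,1\},\ \sigma\in\{+,-\}\}$ whenever the input is $\ket{j}_b\ket{0}^{\otimes n}_s$, so it suffices to analyze the QSP output on each $\sigma$ sector separately.

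Second, I would identify the polynomial to implement. On the $\sigma$ sector the desired transformation is $e^{i\sigma\theta Z_b}\mapsto e^{-iT\cos(2\sigma\theta)Z_b}$, and because $\cos(2\sigma\theta)=2\cos^2(\sigma\theta)-1$ is independent of the sign of $\sigma$, the two sectors collapse to the same $b$-unitary, giving $\widetilde{V}_{\varphi,T}$ with $\varphi=2\cos(2\theta)$. The scalar function $y\mapsto e^{-iTy}$ on $y\in[-1,1]$ admits a truncated Jacobi--Anger expansion in Chebyshev polynomials with uniform error below $\varepsilon_{\rm oc}$ once the truncation order is $L=\mathcal{O}(T+\log(1/\varepsilon_{\rm oc}))$; composing with the degree-two map $x\mapsto 2x^2-1$ yields a polynomial of degree $\mathcal{O}(L)$ in the QSP signal $\cos(\sigma\theta)$ that approximates the target transformation on every $\sigma$ sector. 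Invoking the Low--Chuang QSP construction (equivalently, qubitization of the walk ${\rm c}Q$), there exists a sequence of $\mathcal{O}(L)$ phase angles whose alternating composition with single-qubit rotations on $b$ realizes this polynomial on each $\sigma$ eigenspace exactly; defining $V_{\varphi,T}$ as this circuit, the polynomial truncation error transfers directly to an operator-norm bound on the Grover-plane subspace and hence to $\|(V_{\varphi,T}-\widetilde{V}_{\varphi,T})\ket{j}_b\ket{0}^{\otimes n}_s\|<\varepsilon_{\rm oc}$, with query count $\mathcal{O}(L)=\mathcal{O}(T+\log(1/\varepsilon_{\rm oc}))$ as claimed.

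The main technical obstacle is handling the $\sigma$-dependent global phase $e^{-i\sigma\theta}$ that appears in the decomposition of ${\rm c}Q$: left untreated it would propagate through the polynomial transformation in a $\sigma$-dependent way and spoil the clean factorization that produces $\widetilde{V}_{\varphi,T}$. The standard remedy is to use ${\rm c}Q$ and ${\rm c}Q^\dagger$ symmetrically in the QSP sequence (equivalently, to build the qubitization walk from $Q$ and $Q^\dagger$), so that only even functions of $\theta$ survive in the end; this is also the reason both ${\rm c}Q$ and ${\rm c}Q^\dagger$ appear in the claimed query count.
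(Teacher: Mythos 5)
Your proposal is correct and follows essentially the same route as the paper's proof in Supplemental Material Sec.~\ref{sec_proof_oc}: QSP on the controlled Grover iterate with a Jacobi--Anger truncation of $e^{\mp iT\cos(2\theta)}$ at order $L=\mathcal{O}(T+\log(1/\varepsilon_{\rm oc}))$, with the alternating use of ${\rm c}Q$ and ${\rm c}Q^{\dagger}$ cancelling the $\sigma$-dependent phase exactly as in the paper's $W_Q^{\dagger}/W_Q$ sequence. The only step you elide is that the truncated series must be replaced by QSP-achievable functions and that unitarity then forces off-diagonal leakage of order $\sqrt{\delta}$ rather than $\delta$ (the paper's $\sqrt{16\delta-64\delta^2}$ term), but since $\delta$ decays exponentially in $L$ this only changes constants and the stated scaling survives.
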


Lemma~\ref{lm_operator_transformation} is derived by applying the theory of QSP~\cite{low2017optimal, gilyen2019optimizing, low2019hamiltonian} to this operator transformation (see SM Sec.~\ref{sec_proof_oc} for details).
Since $V_{\varphi, T}$ consists of a total of $L+2$ queries to $U_a$ and $U_a^\dagger$ (see Fig.~\ref{fig_circuit_Vph} in Appendix~\ref{sec_operator_transformation}), we can achieve an approximation error of $\varepsilon_{\rm oc}$ with a logarithmic number of (control-free) operations of $U_a$ and $U_a^\dagger$. 
As a result, this cost accounts for the additional $\log(1/\varepsilon)$ factor in the query complexity stated in Theorem~\ref{thm_PAE}.

With $V_{\varphi, T}$, we can perform a similar signal amplification to Eq.~\eqref{eq_signal_multiplication}:
\begin{align}
    \label{eq_signal_multiplication_with_Q}
    &\ket{\Psi(M=PT)} := V_{\varphi, T}^{\otimes P} \ket{{\rm GHZ}_P}_b\ket{0}^{\otimes n P}_s \nonumber \\
    &\hspace{30pt} \approx \frac{e^{-iM \varphi/2}\ket{0}^{\otimes P}_b + e^{iM \varphi/2} \ket{1}^{\otimes P}_b }{\sqrt{2}} \otimes \ket{0}^{\otimes n P}_s,
\end{align}
where the approximation comes from $V_{\varphi, T} \approx \widetilde{V}_{\varphi, T}$.
That is, the phase $\varphi$ is successfully kick backed to the ancilla space with multiplicative enhancement factor $M=PT$. 
Again, this is the transformation on the Grover plane. 
Also note that $\ket{0}^{\otimes n P}_s$ can be prepared without knowing $\ket{Q_{\pm}}.$
The quantum circuit for this operation is illustrated in Fig.~\ref{fig_circuit_overview}, where the QSP operator denotes $V_{\varphi, T}$.
Note that for a given $M$, the parameters $P$ and $T$ are chosen according to the available quantum resources.

\textit{Amplitude estimation with parallel strategy.}---
In PAE, we estimate $\varphi := 2(1 - 2a)$, approximately embedded in $V_{\varphi, T}$.
Specifically, to resolve phase ambiguity due to the periodicity in Eq.~\eqref{eq_signal_multiplication_with_Q}, we leverage the robust phase estimation (RPE) method~\cite{higgins2009demonstrating, kimmel2015robust} through the following quantum-enhanced measurement in the parallel strategy. 
The concrete procedure for estimating $\varphi$ using RPE is as follows. 
Let $K$ be some positive integer.
(i) For each $k \in \{1, 2, ..., K \}$, prepare $\ket{\Psi(M_k=2^{k-1})}$ with any pair ($P_k, T_k$) satisfying $P_kT_k=2^{k-1}$.
Then perform each of the two projective measurements including the bases
$\{ \ket{\pm_{P_k}}_b := (\ket{0}^{\otimes {P_k}}_b \pm \ket{1}^{\otimes {P_k}}_b)/\sqrt{2} \}$ or $\{ \ket{\pm i_{P_k}}_b := (\ket{0}^{\otimes {P_k}}_b \pm i\ket{1}^{\otimes {P_k}}_b)/\sqrt{2} \}$ on the ancilla subsystem $\nu_k$ times, and record the number of trials in which the outcomes are $\ket{+_{P_k}}_b$ and $\ket{+i_{P_k}}_b$, respectively.
(ii) Conduct classical postprocessing on the results of (i) to estimate the phase.
The pseudocode for the classical post-processing in (ii) is presented in Appendix~\ref{sec_pseudocode}, with details in SM Sec.~\ref{sec_RPE}.
This post-processing is very simple and its computational cost is almost negligible. 

Notably, the outcomes of the two projective measurements in (i) can be reproduced by measuring each ancilla qubit~\cite{belliardo2020achieving}.
The probability of obtaining an even number of 1s from X-measurements on the ancilla qubits of $\ket{\Psi(M_k)}$ equals the projection probability onto $\ket{+_{P_k}}_b$.
After applying $e^{i\pi Z/ 4}$ to the first ancilla qubit, the probability corresponds to that of finding $\ket{+i_{P_k}}_b$.
Therefore, in PAE, the only quantum operation across $P$ parallel systems is the preparation of $\ket{{\rm GHZ}_{P}}_b$.
Moreover, all $k$-th processes in (i) are independent and can run in parallel. 
The pseudocode of PAE is provided in Appendix~\ref{sec_pseudocode}.

Importantly, the RPE procedure works well even if the quantum state preparation and/or measurement contain some small errors. 
Here, we assume that the probabilities of obtaining the outcomes corresponding to projective measurements onto $\ket{+_{P_k}}_b$ and $\ket{+i_{P_k}}_b$ are given by $p_{+,k} := (1 + \cos{M_k \varphi})/2 + \beta_{+, k}$ and $p_{i,k} := (1 + \sin{M_k \varphi}) / 2 + \beta_{i, k}$, respectively, where $\beta_{r, k}$ (for $r \in \{ +, i \}$) denotes the bias in the measurement probability caused by the approximation error of $V_{\varphi,T}$ or the computational error. 
Due to the robustness of RPE, one can achieve the HL scaling for the estimation of $\varphi$ if $|\beta_{r, k}| < \sqrt{6}/8$~\cite{kimmel2015robust, belliardo2020achieving}.
Based on the discussion in Ref.~\cite{belliardo2020achieving}, we have the following lemma regarding $\beta_{r, k}$ and the mean squared estimation error (MSE) upper bound:

\begin{lemma}
[MSE upper bound of RPE~\cite{belliardo2020achieving}]
\label{lm_RPE_RMSE}
Suppose the measurement bias parameters $\{\beta_{r, k}\}$ satisfy $\sup_{r, k} \{|\beta_{r, k}|\} := \beta < \sqrt{6}/8$.
Then, the RPE procedure (i)--(ii) returns the phase estimate $\hat{\varphi} \in (-\pi, \pi]$ such that its mean squared error (MSE) satisfies
\begin{align}
    \label{eq_eps_upper-bound}
    \mathbb{E}\left[ (\hat{\varphi} - \varphi)^2 \right] \le \left( \cfrac{2\pi}{3} \right)^2 \left( \cfrac{1}{4^K} + \sum_{k=1}^K \cfrac{e^{-2 \nu_k (\sqrt{6}/8 - \beta)^2}}{4^{k-4}} \right).
\end{align}
\end{lemma}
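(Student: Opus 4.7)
The plan is to analyze the RPE estimator level-by-level, bounding the probability of misidentification at each level via Hoeffding's inequality and then controlling the propagation of errors through the iterative window-refinement. First I would make the classical post-processing explicit: at level $k$ the sample frequencies $\hat p_{+,k},\hat p_{i,k}$ produce estimators $\hat c_k := 2\hat p_{+,k}-1$ and $\hat s_k := 2\hat p_{i,k}-1$ of $\cos(M_k\varphi)$ and $\sin(M_k\varphi)$, each offset by the biases $2\beta_{+,k}$ and $2\beta_{i,k}$. Setting $\hat\theta_k := \mathrm{atan2}(\hat s_k,\hat c_k)\in(-\pi,\pi]$, the refined estimate $\hat\varphi_k$ is defined as the lift of $\hat\theta_k/M_k$ nearest to the previous estimate $\hat\varphi_{k-1}$ (with $\hat\varphi_0 := 0$), and the returned estimate is $\hat\varphi := \hat\varphi_K$.

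The core of the argument is to define, for each $k$, the success event $E_k := \{|\hat\theta_k - M_k\varphi \bmod 2\pi|\le \pi/3\}$. A direct unit-circle calculation shows that whenever the sample deviations $|\hat p_{+,k} - \mathbb{E}\hat p_{+,k}|$ and $|\hat p_{i,k} - \mathbb{E}\hat p_{i,k}|$ are both below $\sqrt 6/8 - \beta$, the combined biased-plus-noisy signal still lies in the correct $\pi/3$-sector around $(\cos M_k\varphi,\sin M_k\varphi)$; this geometric step is precisely what pins down the constant $\sqrt 6/8$. Hoeffding's inequality applied to the $\nu_k$ Bernoulli trials behind each sample average then yields $\Pr[\neg E_k]\le 4\exp(-2\nu_k(\sqrt 6/8-\beta)^2)$. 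An induction in $k$ shows that on $\bigcap_{j\le k}E_j$ one has $|\hat\varphi_k - \varphi|\le \pi/(3M_k)$, because the $\pi/3$ tolerance on $\hat\theta_k$ transfers to a $\pi/(3M_k)$ tolerance on $\hat\varphi_k$, safely inside half the window spacing $2\pi/M_k$ and so forcing selection of the correct lift.

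To conclude, I would split the MSE according to the first-failure index $k^\star := \min\{k : \neg E_k\}$ (taking $k^\star = K+1$ when no failure occurs):
\[
\mathbb{E}[(\hat\varphi - \varphi)^2] \le \Pr\Big[\bigcap_{k=1}^K E_k\Big]\Big(\tfrac{2\pi}{3\cdot 2^K}\Big)^2 + \sum_{k=1}^K \Pr[k^\star = k]\cdot D_k^2,
\]
where $D_k$ is the worst-case error conditional on first failure at level $k$. The success term already yields $(2\pi/3)^2/4^K$. The remaining work, and the main technical obstacle, is to show that $D_k = O(1/2^k)$ rather than $O(1)$: once level $k$ selects a wrong $2\pi/M_k$-window, the nearest-lift rule forces every subsequent refinement to remain within that window, so the cumulative error stays at most $\mathrm{const}/2^k$. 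Squaring $D_k$ and combining with the Hoeffding bound reproduces the $e^{-2\nu_k(\sqrt 6/8-\beta)^2}/4^{k-4}$ summand, the constant factor $4^4 = 256$ absorbing the worst-case geometric slack. The tight handling of this propagation step, where a single wrong-window selection at level $k$ must be prevented from being amplified by factors up to $M_K = 2^{K-1}$ in the higher levels, is the delicate part and I would follow the detailed accounting in~\cite{belliardo2020achieving} to fix all numerical prefactors.
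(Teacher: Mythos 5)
Your proposal is correct and follows essentially the same argument as the source: the paper does not prove Lemma~2 itself but imports it from Ref.~\cite{belliardo2020achieving}, whose proof is exactly the Hoeffding-plus-geometric-sector bound (yielding the $4e^{-2\nu_k(\sqrt{6}/8-\beta)^2}$ per-stage failure probability) combined with the first-failure decomposition and the nearest-lift drift bound $D_k\le (2\pi/3)\cdot 8\cdot 2^{-k}$ that you sketch. Your accounting of the constant $4^4$ absorbing $\Pr[k^\star=k]\le 4e^{-\cdots}$ times $D_k^2\le (2\pi/3)^2\cdot 64/4^k$ reproduces Eq.~\eqref{eq_eps_upper-bound} exactly.
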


We now provide Theorem 1 for the general case of $P$, followed by the proof sketch.

\setcounter{theorem}{0}
\begin{theorem}[Parallel amplitude estimation; general case]
    \label{thm_PAE}
    Let $\varepsilon\in (0,1)$, and let $P$ be any positive integer.
    There exists a quantum algorithm that estimates $a \in [0, 1]$ encoded in $U_a$ (Eq.~\eqref{eq:quantum_oracle_qae}) within the RMSE $\varepsilon$, using $N=\mathcal{O}\left(1/\varepsilon + P \log{P}\right)$ queries to $U_a$ and $U_a^\dagger$ in total.
    This quantum algorithm uses $P(n+1)$-qubit quantum circuits with the structure depicted in Fig.~\ref{fig_circuit_overview} and the maximum circuit depth of $\mathcal{O}(1/(\varepsilon P) + \log P)$.
\end{theorem}

\textit{Proof sketch of Theorem~\ref{thm_PAE}.}--- 
The goal is, in the framework of RPE, to compute the necessary resources (circuit depth and width) such that the right hand side of Eq.~\eqref{eq_eps_upper-bound} in Lemma~\ref{lm_RPE_RMSE} is at most $\varepsilon^2$. 
For any positive integer $P$, we consider the circuit such that $P_k\le P$ for all $k$.
When applying $P_k$ copies of $V_{\varphi,T_k}$ in parallel as in Fig.~\ref{fig_circuit_overview}, the trace distance between the ideal state and the approximate (implemented) state is $\mathcal{O}(P_k\varepsilon_{\rm oc})$ via a telescoping-sum argument.
Since the trace distance between two quantum states upper bounds the total variation distance for any POVM \cite{nielsen2010quantum}, and $\beta_{r,k}$ is defined as a (two-outcome) measurement-probability bias, we obtain the bound $|\beta_{r,k}| = \mathcal{O}(P_k\varepsilon_{\rm oc})$.
By Lemma~\ref{lm_operator_transformation}, choosing $L_k=\mathcal{O}(T_k+\log(P_k/\beta))$  guarantees $\varepsilon_{\rm oc}=\mathcal{O}(\beta/P_k)$, and thus $|\beta_{r,k}|<\beta<\sqrt{6}/8$.
Choosing $K=\mathcal{O}(\log(1/\varepsilon))$ and $\nu_k=\mathcal{O}(K-k)$, Lemma~\ref{lm_RPE_RMSE} yields $\mathbb{E}[(\hat{\varphi}-\varphi)^2]\le \varepsilon^2$, and since $\varphi := 2(1 - 2a)$, we have $\sqrt{\mathbb{E}[(\hat{a} - a)^2]} < \varepsilon$. 
The total query count is $N = 2\sum_{k=1}^{K} \nu_k (L_k+2) P_k$, where the prefactor $2$ accounts for the two measurement settings $r\in\{+,\;i\}$ in RPE.
Choosing $(T_k,P_k)$ appropriately under the constraint $M_k=P_kT_k=2^{k-1}$ yields $N=\mathcal{O}(1/\varepsilon + P\log P)$.
Since $V_{\varphi,T_k}$ requires $\mathcal{O}(L_k)$ sequential oracle calls, and $\ket{{\rm GHZ}_P}_b$ can be prepared with $\mathcal{O}(\log P)$-depth circuit~\cite{cruz2019efficient, mooney2021generation}, the overall depth becomes $\mathcal{O}(N/P)= \mathcal{O}(1/(\varepsilon P)+\log P)$. 
Since at most $P$ instances of an $(n + 1)$-qubit system are arranged in parallel, the total number of qubits is $P(n+1)$. 
The detailed proof is in SM Sec.~\ref{sec_proof_theorem1}.
\hfill\qedsymbol

Note that, PAE estimates $a$ directly rather than $\theta$, so its total query complexity lacks the $\sqrt{a(1-a)}$ factor that appears in conventional QAE~\cite{brassard2002quantum}; if $a\approx0$ or $a\approx 1$ is known in advance, conventional QAE can be more efficient.

\begin{figure*}[t]
    \includegraphics[width=\textwidth]{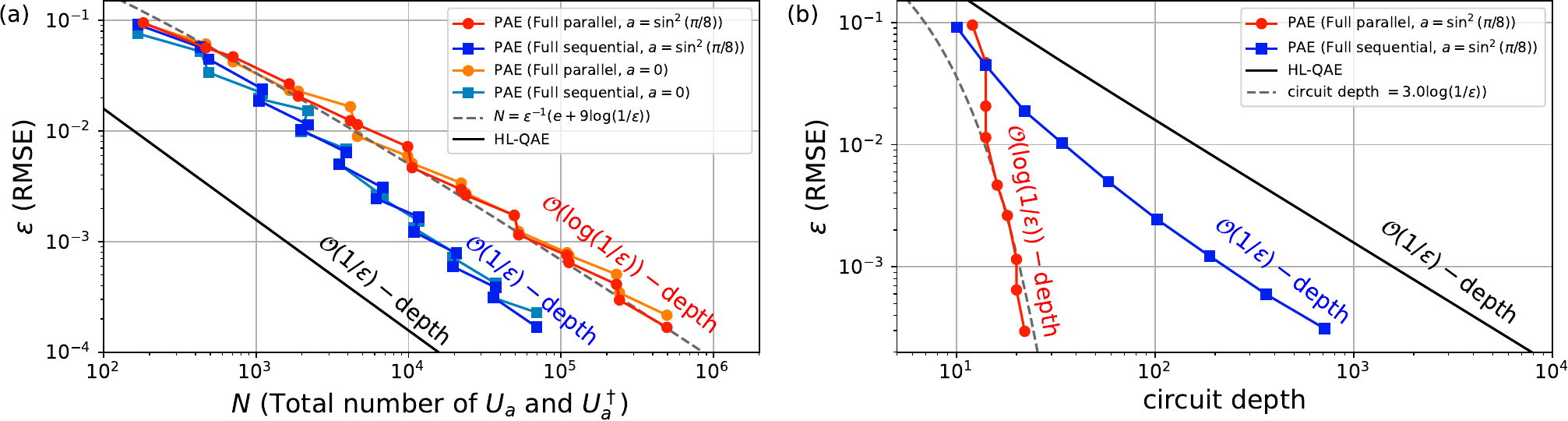}
    \caption{
        (a) Relationship between the number of queries to $U_a$ and $U_a^{\dagger}$ and the estimation error (RMSE), and (b) relationships between the circuit depth and the RMSE.
        In both graphs, the gray dashed line shows a simple fitting result for the ``Full parallel'' case with $a = \sin^2(\pi/8)$.
        The non-smooth behavior in (a) arises from the discrete choices of $K$ in the RPE procedure, as detailed in SM Sec.~\ref{sec_detail_numerical_experiment}.
    }
    \label{fig_graph_query_comp}
\end{figure*}

\textit{Optimality of PAE.}---
To see the optimality of PAE, we revisit an approximate counting problem.
The goal is to estimate the number $N_t$ of marked items in the size-$N_d$ database within  a relative error $\varepsilon_{\rm rel}$. 
In parallel approximate counting, Ref.~\cite{burchard2019lower} claims that the lower bound of $P$-parallel query complexity (the minimal depth of $P$-parallel queries, see the formal definition in Appendix~\ref{sec_parallel_query}) is $\varepsilon_{\rm rel}^{-1}\sqrt{N_d/(PN_t)}$ up to a constant factor.
However, we have identified an error in its derivation; after correcting this, we obtain the following theorem.
\begin{theorem}[Lower bound in parallel approximate counting]\label{thm:optimality_PAE}
    Let us consider an approximate counting problem for the number $N_t\in(\Theta(N_d),N_d/2]$ of marked items in a size-$N_d$ database within a relative error $\varepsilon_{\rm rel}\in (\Omega(N_d^{-1}),1/2)$.
    Then, for any quantum algorithm solving this problem with high probability, the $P$-parallel query complexity is $\Omega(\varepsilon_{\rm rel}^{-1}/P)$ for any $P\in [1,\Theta(N_d))$.
\end{theorem}
\noindent
We provide the specification of that proof error, the derivation of Theorem~\ref{thm:optimality_PAE}, and a more general lower bound in Appendix~\ref{sec_parallel_query}.
Importantly, the corrected lower bound indicates the $1/P$ scaling, as opposed to the previous argument.
Note now that the approximate counting problem in Theorem~\ref{thm:optimality_PAE} can be solved by amplitude estimation algorithms that estimate $N_t/N_d$ within the additive error $\varepsilon=\varepsilon_{\rm rel}\cdot\Theta(1)$. 
In particular, the PAE algorithm using the standard QAE oracle $U_a$ with the parameter $a=N_t/N_d$~\cite{nielsen2010quantum} solves this problem with high probability by making $\mathcal{O}(\varepsilon_{\rm rel}^{-1}/P+\log P)$ $P$-parallel queries.
Therefore, PAE is optimal (up to the additive log factor) among all algorithms (including general QAE algorithms) that can solve approximate counting in the nontrivial parameter regime specified in Theorem~\ref{thm:optimality_PAE}.

\textit{Numerical experiment.}---
We numerically evaluate PAE’s total query count and circuit depth.
The computational details are presented in SM Sec.~\ref{sec_detail_numerical_experiment}.
The simulation code is available on GitHub~\cite{oshio2025Near_github}.
We consider two choices of $P_k$ and $T_k$: 
(i)~\textit{Full parallel:} fix $T_k = 1~\forall k$ and set $P_k=2^{k-1}$, and 
(ii)~\textit{Full sequential:} fix $P_k = 1~\forall k$ and set $T_k=2^{k-1}$. 
For comparison, we also plot the query counts of ``HL-QAE''~\cite{koizumi2025comprehensive} ($\varepsilon = \pi/2(N-1)$), which is the most query-efficient QAE proposed to date.

Figure~\ref{fig_graph_query_comp}(a) shows the query counts versus RMSE. 
In the full sequential case (ii), PAE achieves the HL scaling $N = \mathcal{O}(1/\varepsilon)$. 
In the full parallel case (i), the scaling remains HL with logarithmic overhead, 
$N=\mathcal{O}(\varepsilon^{-1}\log (1/\varepsilon))$, consistent with Theorem~\ref{thm_PAE}. 
This overhead leads to about 4 times bigger queries $N$ for $\varepsilon = 10^{-3}$, but we recall that the full parallel PAE works only with log-depth circuit. 
This is clearly seen in Fig.~\ref{fig_graph_query_comp}(b) showing the circuit depth versus RMSE. 
Actually, in the case (i), the depth scales logarithmically in $1/\varepsilon$, also in agreement with Theorem~\ref{thm_PAE}. 
In contrast, the PAE with the case (ii) needs $\mathcal{O}(1/\varepsilon)$ depth, which is the same as HL-QAE. 
Note however that, compared to HL-QAE which requires $\mathcal{O}(\log (1/\varepsilon))$ ancilla qubits, this PAE achieves roughly $1/6$ circuit depth for $\varepsilon \lesssim 5\times10^{-3}$ while using only a single ancilla qubit, at the cost of about 10 times increase in $N$ as observed in Fig.~\ref{fig_graph_query_comp}(a).

\textit{Summary and discussion.}---
PAE’s key feature is its capability of controlling the trade-off between circuit depth and qubit count. 
This may enable pursuing HL scaling for amplitude estimation even on depth-limited early fault-tolerant quantum computing devices.  
For instance, for the case $\varepsilon = 10^{-3}$, PAE with $P=64$ needs quantum circuits of depth 20 assisted by a 64-qubit GHZ state.
In addition, under the assumption that the wall-clock time of a quantum algorithm is determined by the depth of its quantum circuit, leveraging PAE to increase parallelism allows for a reduction in total computation time compared to conventional (non-parallel) methods. 
Since amplitude estimation can be seen as a metrological estimation task, it is natural from the viewpoint of quantum metrology to achieve the $1/P$ scaling for the parallelization $P$. 
Further exploring quantum algorithms that admit $1/P$ scaling is an important future direction, while many parallel quantum algorithms fail to achieve this scaling~\cite{zalka1999grover, grover2004quantum, jeffery2017optimal} and our technique seems specific to amplitude estimation.

\begin{acknowledgments}
We thank Alexander Dalzell and Ronald de Wolf for helpful comments. 
We also thank the members of the Keio University Quantum Computing Center for many helpful discussions and feedback.
This work is supported by MEXT Quantum Leap Flagship Program Grant No. JPMXS0118067285 and JPMXS0120319794. 
K.O. acknowledges support by SIP Grant Number JPJ012367.
K.W. was supported by JSPS KAKENHI Grant Number JP24KJ1963.
\end{acknowledgments}

\clearpage
\section*{Appendix}
\setcounter{section}{0}
\renewcommand{\thesection}{\Alph{section}}
\renewcommand{\thesubsection}{\thesection\arabic{subsection}}
\makeatletter
\renewcommand{\p@subsection}{}
\makeatother
\setcounter{equation}{0}
\renewcommand{\theequation}{\Alph{section}\arabic{equation}}

\section{Construction of $V_{\varphi, T}$ with QSP}
\label{sec_operator_transformation}

Using QSP, $\widetilde{V}_{\varphi, T}$ defined in Eq.~\eqref{eq_def_Vphi} can be approximated as $V_{\varphi, T}$ of the form:
\begin{align}
    \label{eq_def_Vph_detail}
    V_{\varphi, T} &= \prod^{L/2}_{l = 1} \left( R_x(\xi_{2l-1}') \otimes \bm{1}_s \right) W_Q^{\dagger} \left( R_x(-\xi_{2l-1}') \otimes \bm{1}_s \right)\nonumber \\ 
    &\hspace{30pt} \times \left( R_x(\xi_{2l}) \otimes \bm{1}_s \right) W_Q \left( R_x(-\xi_{2l}) \otimes \bm{1}_s \right), 
\end{align}
where $W_Q = {\rm c}Q \times R_z(\pi/2)$, $R_z(\xi) = e^{-i\xi Z_b / 2}$ and $R_x(\xi) = e^{-i\xi X_b / 2}$, with $Z_b$ and $X_b$ being the Pauli operators acting on the ancilla qubit.
$\vec{\xi}$ is a QSP hyperparameter, referred to as the angle sequence, chosen to ensure that $V_{\varphi, T} \approx \widetilde{V}_{\varphi, T}$.
Here, $\xi' = \xi + \pi$.
The circuit structure of $V_{\varphi, T}$ is illustrated in Fig.~\ref{fig_circuit_Vph}.
The detail of this construction is presented in SM Sec.~\ref{sec_operator_transformation_detail}.

\begin{figure}[H]
    \centering
    \includegraphics[width=0.95\columnwidth]{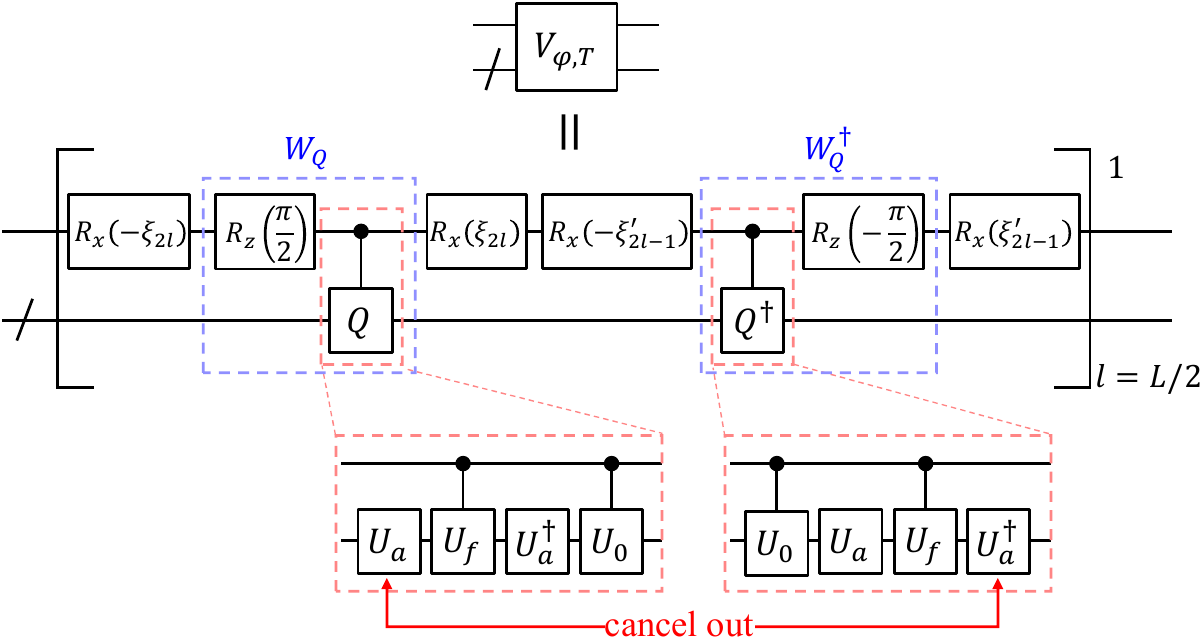}
    \caption{ 
        Construction of $V_{\varphi, T}$. 
        Here, $L$ is a function of $T$, and $\varphi$ is connected with the eigenphase $2\theta$ of $Q$ (in the Grover plane) by $\varphi=2\cos(2\theta)$.
    }
    \label{fig_circuit_Vph}
\end{figure}

Note that $U_a$ in $W_Q$ cancels out with the adjacent $U_a^{\dagger}$ in $W_Q^{\dagger}$.
Therefore, $V_{\varphi, T}$ contains a total of $L + 2$ applications of $U_a$ and $U_a^{\dagger}$.
To construct $V_{\varphi, T}$, it is also possible to employ the generalized QSP (GQSP)~\cite{motlagh2024generalized} instead of standard QSP.
While GQSP has been shown to halve the cost of Hamiltonian simulation~\cite{berry2024doubling}, it does not provide the same reduction in our setting, as the cancellation structure between $U_a$ and $U_a^\dagger$ does not arise when using GQSP.

\section{Pseudocodes}
\label{sec_pseudocode}
The complete PAE procedure and the classical post-processing in RPE are presented in Algorithms~\ref{alg_PAE} and~\ref{alg_RPE}, respectively.

In PAE, $P_k$ and $T_k$ can be chosen under the constraint $M_k = P_k T_k = 2^{k-1}$, depending on available quantum resources.
However, large $T_k$ may destabilize the computation of the QSP hyperparameters~\cite{haah2019product,chao2020finding}.
To address this issue, one can achieve the same phase amplification effect by applying $V_{\varphi,T}$ sequentially $S$ times, at the cost of replacing $\varepsilon_{\rm oc} \mapsto S\varepsilon_{\rm oc}$ in the error bound stated in Lemma~\ref{lm_operator_transformation} (see SM Sec.~\ref{sec_proof_oc} for details).
Due to this error bound modification, the query complexity becomes $ N = \mathcal{O}\left(\varepsilon^{-1} + PS \log{PS}\right)$, and the depth becomes $\mathcal{O}(1/(\varepsilon P) + S\log PS)$.

\begin{algorithm}[H]
    \caption{Parallel amplitude estimation}
    \label{alg_PAE}
    \begin{algorithmic}[1]
    \Require Operator $U_a, U_a^{\dagger}$, target RMSE $ \varepsilon \in (0, 1)$, target bias threshold $\beta \in (0, \sqrt{6}/8)$.
    \Ensure Estimate $\widehat{a}$.
    \State $K \gets \lceil \log_2(1/\varepsilon) \rceil + 6$ 
    \State Construct $Q$ defined in Eq.~\eqref{eq_def_Q} from $U_a$ and $U_a^{\dagger}$.
    \State Set $[P_1, P_2,..., P_K]$ and $[T_1, T_2,..., T_K]$ so that $P_kT_k = 2^{k-1}, \forall k \in \{1,2, \cdots, K\}$.
    \For{$k = 1, 2,..., K$} \ \textbackslash\textbackslash This for-loop can be parallelized
        \State $\nu_k \gets 1 + \lceil \log{6} \times (K - k)/2(\sqrt{6}/8-\beta)^2 \rceil$
        \State Construct $V_{\varphi, T_k}$ using $U_a$ and $U_a^\dagger$ for \par\hspace{-7pt} $L_k = \mathcal{O}(T_k + \log(P_k/\beta))$ times. 
        \State Perform $V_{\varphi, T_k}^{\otimes P_k}$ on initial state $\ket{{\rm GHZ}_{P_k}}_b\ket{0}^{\otimes n P_k}_s$, \par\hspace{-7pt} in the same manner as Fig.~\ref{fig_circuit_overview}.
        \State Perform two measurements ($\nu_{k}$ repetitions each):
        \par\hspace{-7pt} (i) X-measurement on each ancilla qubit;
        \par\hspace{-7pt} (ii) X-measurement on each ancilla qubit after
        \par\hspace{9pt}      applying $e^{i\pi Z/4}$ to the first ancilla qubit.
        \State Set $h_{+,k}$ and $h_{i,k}$ to the counts of even‑parity
        \par\hspace{-7pt} outcomes in cases (i) and (ii), respectively.
        \State Calculate $f_{+,k} = h_{+,k}/\nu_{k}$ and $f_{i,k} = h_{i,k}/\nu_{k}$.
    \EndFor 
    \State Obtain $\widehat{a}$ using Algorithm~\ref{alg_RPE} with $\{f_{+,k}\}_{k=1}^K$ and $\{f_{i,k}\}_{k=1}^K$.
    \end{algorithmic}
\end{algorithm}

\begin{algorithm}[H]
    \caption{Robust phase estimation (classical post-processing part)}
    \label{alg_RPE}
    \begin{algorithmic}[1]
    \Require Max. number of steps $K$, Observed probabilities $\{f_{+,k}\}_{k=1}^K$, $\{f_{i,k}\}_{k=1}^K$
    \Ensure Estimate $\widehat{\varphi} \in [-\pi, \pi) $
    \For{$k = 1, 2,..., K$}
        \State $\widehat{M_k\varphi_{k}'} \gets {\rm atan2}(2f_{i, k} - 1, 2f_{+, k} - 1) \in [0, 2\pi)$
        \State $\widehat{\varphi}'_{k, 0} = \widehat{M_k\varphi_{k}'} / M_k  \in [0, 2\pi/M_k)$.
        \If {$k = 1$}
            \State $\widehat{\varphi}'_1 \gets \widehat{\varphi}'_{1, 0}$
        \Else
            \State $\eta \gets \left\lfloor \cfrac{\widehat{\varphi}'_{k-1}}{\pi/2^{k-2}} \right\rfloor$
            \If{$\widehat{\varphi}'_{k-1} - \left( \widehat{\varphi}'_{k, 0} + (\eta - 1)\pi / 2^{k-2} \right) \le \pi/2^{k-1}$}
                \State $\widehat{\varphi}'_k \gets \widehat{\varphi}'_{k, 0} + (\eta - 1)\pi / 2^{k-2}$
            \ElsIf{$\left( \widehat{\varphi}'_{k, 0} + (\eta + 1)\pi / 2^{k-2} \right) - \widehat{\varphi}'_{k-1} < \pi/2^{k-1}$}
                \State $\widehat{\varphi}'_k \gets \widehat{\varphi}'_{k, 0} + (\eta + 1)\pi / 2^{k-2}$
            \Else
                \State $\widehat{\varphi}'_k \gets \widehat{\varphi}'_{k, 0} + \eta\pi / 2^{k-2}$
            \EndIf
        \EndIf
    \State $\widehat{\varphi}_k \gets \widehat{\varphi}'_k - 2\pi \left\lfloor \cfrac{\widehat{\varphi}'_k + \pi}{2\pi} \right\rfloor$
    \EndFor
    \State $\widehat{\varphi} \gets \widehat{\varphi}_{K}$    
    \end{algorithmic}
\end{algorithm}

\section{Consistency with existing parallel query lower bounds}
\label{sec_parallel_query}
Prior works~\cite{zalka1999grover, grover2004quantum, jeffery2017optimal, burchard2019lower} derived lower bounds on the $P$-parallel query complexity of unstructured quantum search and approximate counting.
A $P$-parallel query model allows each query step to consist of $P$ oracle queries in parallel.
Then, the (bounded-error) $P$-parallel query complexity of a function $f$ is defined as the minimal number (or depth) of such $P$-parallel queries needed among all quantum algorithms that output $f(x)$ with high probability for every input $x$ in a domain. 
When $P=1$, this definition reduces to the standard (sequential) query complexity.
In the query model, algorithms have access to an oracle that indicates whether a queried item is marked.
For example, the standard oracle is given by $O_x:\ket{j,b}\mapsto \ket{j,b\oplus x_j}$, where $b\in \{0,1\}$ and $x=x_1x_2\cdots$ is an input bit string~\cite{de2019quantum}.
Here, we compare the existing lower bounds of the $P$-parallel queries (depth) with that of the proposed PAE algorithm with error $\varepsilon$, denoted by 
\begin{equation}
    {\rm PAE}^{P}(\varepsilon)=\mathcal{O}\left(\frac{1}{\varepsilon P}+\log P\right).
\end{equation}

\subsection{Parallel quantum search}
\label{sec_parallel_query_search}
We first verify consistency with the lower bound of parallel quantum search.
For an unstructured search problem with a single marked item, any quantum algorithm with $P$-parallel queries has depth $\Omega(\sqrt{N_d/P})$, where $N_d$ is the database size~\cite{zalka1999grover, grover2004quantum}.
This can be rephrased as follows; the bounded-error $P$-parallel query complexity to compute the $N_d$-bit OR function is $\Omega(\sqrt{N_d/P})$~\cite{jeffery2017optimal}. 
Now, using the standard QAE oracle construction with the parameter $a=N_t/N_d$~\cite{nielsen2010quantum}, the PAE algorithm can estimate the number $N_t$ of marked items.
Thus, computing OR reduces to distinguishing $a=0$ from $a\geq 1/N_d$ in PAE with error $\varepsilon=\Theta(1/N_d)$.
As a result, the PAE algorithm requires $\mathcal{O}(N_d/P+\log P)$ $P$-parallel queries for OR.
This exceeds $\Omega(\sqrt{N_d/P})$ for $P< N_d$,
so there is no contradiction with the parallel search lower bound.

\subsection{Parallel approximate counting}
\label{sec_parallel_query_counting}
We next revisit the lower bound of parallel approximate counting in Ref.~\cite{burchard2019lower} and point out an error in its derivation.
The goal of approximate counting is to estimate the number of marked items $N_t$ among $N_d$ data items within a target relative error $\varepsilon_{\rm rel}$.
Ref.~\cite{burchard2019lower} claims that, for the relative error $\varepsilon_{\rm rel}$ and the (non-zero) number $N_t$ satisfying $N_t+\lceil\varepsilon_{\rm rel}N_t\rceil\leq N_d$~\footnote{In Ref.~\cite{burchard2019lower}, although the author considers the condition $N_t+\varepsilon_{\rm rel}N_t\leq N_d$, we here introduce the ceil function for properly defining two distinct sets $X$ and $Y$.}, any quantum algorithm needs $\Omega(\varepsilon_{\rm rel}^{-1}\sqrt{N_d/(PN_t)})$ $P$-parallel queries.
At first sight, this $1/\sqrt{P}$-dependence seems inconsistent with PAE when $N_d<PN_t$.
This is because by taking $(N_t/N_d)\varepsilon_{\rm rel}$ as $\varepsilon$ (despite $N_t$ being unknown a priori), PAE yields an estimate within error $\varepsilon_{\rm rel}$ by making $\mbox{PAE}^P((N_t/N_d)\varepsilon_{\rm rel})$, namely $1/P$-dependent, $P$-parallel queries.
However, this comparison does not make sense because we have found an error in the derivation of Ref.~\cite{burchard2019lower}.

The error is in the proof of Theorem 3 in Ref.~\cite{burchard2019lower}; the parameter $\ell$ in this paper, defined as the maximum size of sets for an (extended) quantum adversary method, is underestimated, which results in an overly strong lower bound.
Specifically, we find that the correct value of $\ell$ is
\begin{equation}
    \ell=\binom{N_d-N_t}{\lceil\varepsilon_{\rm rel}N_t\rceil}-\binom{N_d-N_t-P}{\lceil\varepsilon_{\rm rel}N_t\rceil}
\end{equation}
when $N_d-N_t-P\geq \lceil\varepsilon_{\rm rel}N_t\rceil$.
This is strictly larger than the previous evaluation $\ell=\binom{N_d-N_t-1}{\varepsilon_{\rm rel}N_t-1}$ even for $P=2$.
By using the correct value of $\ell$, we prove that in parallel approximate counting, the lower bound of $P$-parallel query complexity is $\Omega({\rm Q})$, where ${\rm Q}$ is defined as
\begin{align}
\label{Q in App}
        {\rm Q}&=\left[1-\frac{\dbinom{N_d - N_t - P}{\lceil\varepsilon_{\rm rel}N_t\rceil}}{\dbinom{N_d - N_t}{\lceil\varepsilon_{\rm rel}N_t\rceil}}\right]^{-\frac{1}{2}}\left[1-\frac{\dbinom{N^{\varepsilon_{\rm rel}}_t - P}{\lceil\varepsilon_{\rm rel}N_t\rceil}}{\dbinom{N^{\varepsilon_{\rm rel}}_t}{\lceil\varepsilon_{\rm rel}N_t\rceil}}\right]^{-\frac{1}{2}},
\end{align}
where $N_t^{\varepsilon_{\rm rel}}:=N_t+\lceil\varepsilon_{\rm rel}N_t\rceil$.
The proof is given in SM Sec.~\ref{sec_parallel_adv_bound}, which also includes the correct derivation of $\ell$.

The lower bound $\Omega({\rm Q})$ for approximate counting implies validity and optimality of PAE.
We summarize some features of $\rm Q$ below; see SM Sec.~\ref{sec_parallel_adv_bound} for detailed discussions. 
First, as a sanity check, we can confirm that $\rm Q$ is always upper bounded by the depth scaling of PAE as 
\begin{equation}
        {\rm Q}= \mathcal{O}\left(\mbox{PAE}^{P}\left(\varepsilon_{\rm rel}{N_t/N_d}\right)\right).
\end{equation}
In particular, this highlights the validity of the $1/P$ scaling in PAE, as opposed to the previous overly strong bound.
Next, in a nontrivial regime $P\leq \min\{N_t,N_d-N_t^{\varepsilon_{\rm rel}}\}$ where $P$ is not too large, we can simplify $\rm Q$ in Eq.~\eqref{Q in App} and identify a clear lower bound 
\begin{equation}
\label{eq:apdx_advlb_nontrivial}
        {\rm Q}=\Omega\left(\frac{1}{P}\frac{N_t}{\lceil \varepsilon_{\rm rel}N_t\rceil}\sqrt{\frac{N_d-N_t(1+\varepsilon_{\rm rel})}{N_t}}\right).
\end{equation}
Again, we can confirm the $1/P$ scaling explicitly.
This lower bound immediately proves Theorem~\ref{thm:optimality_PAE} in the main text, which shows the optimality of PAE.

\textit{Proof of Theorem~\ref{thm:optimality_PAE}.}---
We assume that $N_t \in (\Theta(N_d),N_d/2]$ and $\varepsilon_{\rm rel}\in (\Omega(N_d^{-1}),1/2)$.
Then, we have the following evaluations:
\begin{equation}
    \frac{N_d-N_t(1+\varepsilon_{\rm rel})}{N_t}=\Theta(1),~~~\frac{N_t}{\lceil \varepsilon_{\rm rel}N_t\rceil}=\Theta(1/\varepsilon_{\rm rel}).
\end{equation}
Moreover, the regime $P\leq \min\{N_t,N_d-N_t^{\varepsilon_{\rm rel}}\}$ indicates the possible range $P\in [1,\Theta(N_d))$.
Combining this with Eq.~\eqref{eq:apdx_advlb_nontrivial} yields ${\rm Q}=\Omega(\varepsilon_{\rm rel}^{-1}/P)$.

\clearpage

\clearpage
\bibliography{main}

\clearpage
\onecolumngrid

\section*{Supplemental Material}
\setcounter{section}{0}
\renewcommand{\thesection}{S\arabic{section}}
\setcounter{subsection}{0}
\renewcommand{\thesubsection}{\Alph{subsection}}
\setcounter{equation}{0}
\renewcommand{\theequation}{S\arabic{equation}}
\setcounter{figure}{0}
\renewcommand{\thefigure}{S\arabic{figure}}
\setcounter{table}{0}
\renewcommand{\thetable}{S\arabic{table}}
\makeatletter
\renewcommand{\p@subsection}{\thesection\ }
\makeatother

\section{Comparison with other QAE}
\label{sec_comparison_QAE}
Table~\ref{table_QAE_comparison} summarizes the comparison of PAE with prior QAEs in terms of the number of qubits, maximum circuit depth, and query complexity.
Here, $n$ denotes the number of qubits on which $U_a$ acts.
$\varepsilon_{\rm add}$ represents the additive error, whereas $\varepsilon$ denotes the root mean squared error (RMSE).

\begin{table}[H]
    \label{table_QAE_comparison}
    \centering
    \small
    \begin{tabular}{|c|c|c|c|}
        \hline
        \parbox[c][0.6cm][c]{0cm}{} \textbf{Algorithm} & \textbf{\#qubits} & \textbf{Max~depth} & \textbf{\#Query}\\
        \hhline{|====|} 
        \parbox[c][1.2cm][c]{0cm}{} QAE \cite{brassard2002quantum} & $n + \mathcal{O}(\log(1/\varepsilon_{\rm add}))$     & $\mathcal{O}\left( \cfrac{1}{\varepsilon_{\rm add}}\right)$& $ \mathcal{O}\left(\cfrac{1}{\varepsilon_{\rm add}}\right)$ \\
        \hline
        \parbox[c][1.2cm][c]{0cm}{} MLAE \cite{suzuki2020amplitude}                  & $n$               & $\mathcal{O}\left(  \cfrac{1}{\varepsilon} \right)$             & $ \mathcal{O}\left(\cfrac{1}{\varepsilon}\right)$ \\
        \hline
        \parbox[c][1.2cm][c]{0cm}{} IQAE \cite{grinko2021iterative}                  & $n$               & $\mathcal{O}\left(  \cfrac{1}{\varepsilon_{\rm add}} \right)$  & $\mathcal{O}\left( \cfrac{1}{\varepsilon_{\rm add}}\right)$ \\
        \hline
        \parbox[c][1.2cm][c]{0cm}{} Power-law AE \cite{giurgica2022low}              & $n$               & $\mathcal{O}\left( \cfrac{1}{\varepsilon_{\rm add}^{{1-\kappa}}} \right)$  & $\tilde{\mathcal{O}}\left(\cfrac{1}{\varepsilon_{\rm add}^{1+\kappa}} \right)$ \\
        \hline
        \parbox[c][1.2cm][c]{0cm}{} PAE(general) [This work]     & $P(n+1)$     & $\mathcal{O}\left( \cfrac{1}{P\varepsilon} + \log P \right)$  & $\mathcal{O}\left(\cfrac{1}{\varepsilon} + P\log P \right)$ \\
        \hline
        \parbox[c][1.2cm][c]{0cm}{} PAE(fully parallel) [This work]     & $\mathcal{O}(n/\varepsilon)$     & $\mathcal{O}\left( \log\left(\cfrac{1}{\varepsilon}\right) \right)$  & $\mathcal{O}\left(\cfrac{\log(1 / \varepsilon)}{\varepsilon}\right)$ \\
        \hline
    \end{tabular}
    \caption{A comparison of QAE algorithms in terms of the number of qubits,   maximum circuit depth, and query complexity.
            Here, $n$ denotes the number of qubits acted on by $U_a$, $\varepsilon_{\rm add}$ represents the additive estimation error, and $\varepsilon$ denotes the RMSE.
            For the methods~\cite{brassard2002quantum,grinko2021iterative,giurgica2022low}, the complexity is evaluated such that the final estimate has an additive error $\varepsilon_{\rm add}$ in a high probability.
            The parameter $P$ is the degree of the parallelization in PAE, and $\kappa \in (0,1)$ controls the trade-off between circuit depth and query complexity in power-law AE.
            For simplicity, we ignore a log-log factor in IQAE.
            }
\end{table}

The ``fully parallel'' variant of PAE achieves $\mathcal{O}(\log(1/\varepsilon))$-depth at the cost of increased qubit resources.
To the best of our knowledge, PAE is the only method that achieves logarithmic depth scaling while maintaining query complexity at nearly Heisenberg limit (HL) scaling uniformly for all  $a \in [0, 1]$.

\section{Construction of engineered phase shifter with QSP}
\label{sec_operator_transformation_detail}

In this section, we explain how to construct the engineered phase shifter $V_{\varphi, T}$ using quantum signal processing (QSP).
First, we briefly review QSP, and then describe the procedure for constructing $V_{\varphi, T}$.

\subsection{Overview of QSP}
Given a unitary operator $W = \sum_\lambda e^{i\theta_\lambda}\ket{\lambda}\bra{\lambda}_s$ with $\ket{\lambda}$ the eigenstate of $W$ and $e^{i\theta_{\lambda}}$ the corresponding eigenvalue, QSP realizes a transformation of the eigenphases $\theta_{\lambda}$, by interleaving applications of the controlled operator $W_x$:
\begin{align}
\label{eq_def of Wx}
    W_x := \ket{+}\bra{+}_b \otimes \bm{1}_s + \ket{-}\bra{-}_b \otimes W
         = \sum_\lambda e^{i\theta_\lambda/2} R_x(\theta_\lambda)\otimes \ket{\lambda}\bra{\lambda}_s,
\end{align}
and $R_z(\xi) = e^{-i\xi Z_b / 2}$ \cite{low2017optimal, low2017quantum, low2019hamiltonian}.
This results in the operator $V_{x, \vec{\xi}}$:
\begin{align}
    \label{eq_QSP_V_Wx}
    V_{x, \vec{\xi}}   
    &= V_{x, \xi_{1}+\pi}^{\dagger} V_{x, \xi_{2}} V_{x, \xi_{3}+\pi}^{\dagger}
        \cdots V_{x, \xi_{L-1}+\pi}^{\dagger} V_{x, \xi_L} \nonumber \\
                    &= \sum_\lambda \left( \mathcal{A}(\theta_\lambda) \bm{1}_b
                                            +i\mathcal{B}(\theta_\lambda) Z_b 
                                            +i\mathcal{C}(\theta_\lambda) X_b 
                                            +i\mathcal{D}(\theta_\lambda) Y_b 
                                        \right) \otimes  \ket{\lambda}\bra{\lambda}_s \nonumber \\
                    &= \sum_\lambda \begin{pmatrix}
                                        \mathcal{A}(\theta_\lambda) + i\mathcal{B}(\theta_\lambda)  & i\mathcal{C}(\theta_\lambda) + \mathcal{D}(\theta_\lambda) \\
                                        i\mathcal{C}(\theta_\lambda) - \mathcal{D}(\theta_\lambda)  & \mathcal{A}(\theta_\lambda) - i\mathcal{B}(\theta_\lambda)
                                    \end{pmatrix}_b
                                    \otimes  \ket{\lambda}\bra{\lambda}_s, \\
    V_{x, \xi} & = \left( R_z(\xi) \otimes \bm{1}_s \right) W_x \left( R_z(-\xi) \otimes \bm{1}_s \right)
\end{align}
where $L$ is an even integer; this alternating product of $V_{x, \xi}$ and $V^\dagger_{x, \xi+\pi}$ uncomputes the unnecessary phase $e^{i\theta_\lambda/2}$ in 
Eq.~\eqref{eq_def of Wx}. 
Also, $\bm{1}_s$ and $\bm{1}_b$ denote the identity operators on the system and ancilla qubits, respectively.
$X_b,Y_b,Z_b$ are the Pauli operators acting on the ancilla qubit.
$\mathcal{A}, \mathcal{B}, \mathcal{C}$ and $\mathcal{D}$ are real-valued functions determined by the rotation angles $\vec{\xi}=(\xi_1,...,\xi_L)$.
The 2×2 matrix in the rightmost equation acts on the computational basis $\{\ket{0}_b, \ket{1}_b\}$ of the ancilla qubit.

The above construction of QSP using $W_x$ and $R_z(\xi)$ is referred to as the Wx-convention.
An alternative form, known as the Wz-convention~~\cite{chao2020finding, martyn2021grand}, uses the operator $W_z$:
\begin{align}
    \label{eq_Wz}
    W_z := \ket{0}\bra{0}_b \otimes \bm{1}_s + \ket{1}\bra{1}_b \otimes W
       = \sum_\lambda e^{i\theta_\lambda/2} R_z(\theta_\lambda)\otimes \ket{\lambda}\bra{\lambda}_s,
\end{align}
and $R_x(\xi)$, to construct 
\begin{equation}
    V_{z, \vec{\xi}} = V_{z, \xi_{1}+\pi}^{\dagger} V_{z, \xi_{2}} V_{z, \xi_{3}+\pi}^{\dagger}
        \cdots V_{z, \xi_{L-1}+\pi}^{\dagger} V_{z, \xi_L}, ~~~ 
    V_{z, \xi} = \left( R_x(\xi) \otimes \bm{1}_s \right) W_z \left( R_x(-\xi) \otimes \bm{1}_s \right). 
\end{equation}
Since the Wz-convention is well suited for constructing $V_{\varphi, T}$ that induces a relative phase $e^{iT\varphi}$ between $\ket{0}$ and $\ket{1}$, we employ this convention.
In the Wz-convention, the operator $V_{z, \vec{\xi}}$ is related to $V_{x, \vec{\xi}}$ in the following form~\cite{martyn2021grand}:
\begin{align}
    \label{eq_Vz}
    V_{z, \vec{\xi}}&= H_b V_{x,\vec{\xi}} H_b \nonumber \\
                    &= \sum_\lambda \begin{pmatrix}
                                        \mathcal{A}(\theta_\lambda) + i\mathcal{C}(\theta_\lambda)  & i\mathcal{B}(\theta_\lambda) - \mathcal{D}(\theta_\lambda) \\
                                        i\mathcal{B}(\theta_\lambda) + \mathcal{D}(\theta_\lambda)  & \mathcal{A}(\theta_\lambda) - i\mathcal{C}(\theta_\lambda)
                                    \end{pmatrix}_b
                                    \otimes  \ket{\lambda}\bra{\lambda}_s,
\end{align}
where $H_b$ is the Hadamard gate acting on the ancilla qubit.
As will be shown later,  to realize the required transformation, we only need to focus on $(\mathcal{A}, \mathcal{C})$.
The theorem below gives a complete characterization of $(\mathcal{A}, \mathcal{C})$.

\begin{theorem}[\bf Achievable $(\mathcal{A}, \mathcal{C})$ in QSP - Theorem 1 of \cite{low2017optimal}]
\label{thm_QSP}
For all even integers $L>0$,  a pair of real functions $\mathcal{A}, \mathcal{C}$ can be implemented by some angle sequence $\vec{\xi}\in \mathbb{R}^L$ if and only if the following conditions are satisfied:\\
(1) $\forall \theta \in \mathbb{R}, \mathcal{A}^2(\theta) + \mathcal{C}^2(\theta) \le 1$ \\
(2) $\mathcal{A}(0) = 1$ \\
(3) $\mathcal{A}(\theta) = \Sigma^{L/2}_{l=0} a_l\cos{(l\theta)} ,\ \{ a_l \} \in \mathbb{R}^{L/2 + 1}$ \\ 
(4) $\mathcal{C}(\theta) = \Sigma^{L/2}_{l=0} c_l\sin{(l\theta)} ,\ \{ c_l \} \in \mathbb{R}^{L/2}$ \\ 
Moreover, $\vec{\xi}$ can be computed from $\mathcal{A}(\theta)$ and $\mathcal{C}(\theta)$ in classical time $\mathcal{O}({\rm poly}(L))$.
\end{theorem}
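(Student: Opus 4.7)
The plan is to reduce everything to a statement about the single-eigenspace action, then prove the two directions separately. Because $V_{x,\vec{\xi}}$ commutes with the spectral projections $\ket{\lambda}\bra{\lambda}_s$ (the $R_z(\xi)$ factors act only on $b$ and $W_x$ is controlled), it suffices to analyze, for each eigenphase $\theta$, a $2\times 2$ unitary $U(\theta)$ built as an alternating product of $R_z(\xi_j)$ factors and $\mathrm{diag}(1,e^{i\theta})$ in the $\{\ket{+},\ket{-}\}$ basis. The task becomes: characterize which $U(\theta)$ arise as such an alternating product of even length $L$, and then read off the top-left entry to extract the pair $(\mathcal{A},\mathcal{C})$ appearing in Eq.~\eqref{eq_Vz}.

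For the \emph{necessity} direction, I would proceed by induction on $L$. Each insertion of a $W_x \cdot R_z(\xi)$ pair multiplies each matrix entry by a Laurent polynomial of degree at most one in $e^{\pm i\theta/2}$, so after $L$ factors the entries are Laurent polynomials of degree $L$ with a definite parity forced by the alternating structure. Combining this parity with the Hermitian-conjugation symmetries of a unitary shows that $\mathrm{Re}$ and $\mathrm{Im}$ of the $(0,0)$ entry of $V_{z,\vec{\xi}}$ are respectively a cosine series and a sine series of degree $L/2$, establishing (3) and (4). Condition (1) is immediate from the unitarity of each block. Condition (2) follows by evaluating at $\theta=0$: then $W_x(0)=I$ and the whole product collapses to a product of $R_z$ rotations, whose $(0,0)$ real part equals $1$.

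For the \emph{sufficiency} direction, I would give a constructive algorithm that ``peels off'' one angle at a time. Given a valid pair $(\mathcal{A},\mathcal{C})$ of degree $L/2$, I would extract the leading trigonometric coefficients to determine a $\xi_L$ such that multiplying on the right by $W_x(\theta)^{-1} R_z(-\xi_L)$ produces a new pair $(\mathcal{A}',\mathcal{C}')$ of degree strictly smaller, and then iterate. The base case $L=0$ forces $\mathcal{A}\equiv 1$ and $\mathcal{C}\equiv 0$, by combining (2) with (1) applied to a degree-zero trigonometric polynomial. The polynomial-time computability of $\vec{\xi}$ then follows because each peeling step only involves reading off top-degree coefficients and inverting a $2\times 2$ $SU(2)$ element.

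The main obstacle will be showing that the peeling step preserves conditions (1)--(4), in particular that the reduced pair $(\mathcal{A}',\mathcal{C}')$ still satisfies the bounded-modulus condition and the correct parity/degree structure. This requires keeping track not only of $(\mathcal{A},\mathcal{C})$ but also of the ``hidden'' partners $(\mathcal{B},\mathcal{D})$ whose existence is forced by the unitarity constraint $\mathcal{A}^2+\mathcal{B}^2+\mathcal{C}^2+\mathcal{D}^2=1$; these partners are determined only up to overall sign, and the induction must make a globally consistent sign choice so that the residual matrix after peeling remains a genuine QSP-achievable unitary. The cleanest route I know is to lift the argument to the full Laurent polynomial ring, factor the $2\times 2$ unitary as a telescoping product of elementary ``processing'' factors via a Euclidean-style division on trigonometric polynomials, and then read off $\vec{\xi}$ from this factorization, mirroring the treatment in \cite{low2017optimal} and the refinements of \cite{gilyen2019optimizing}.
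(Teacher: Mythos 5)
The paper does not prove this statement at all: it is imported verbatim as Theorem~1 of Ref.~\cite{low2017optimal}, so there is no in-paper argument to compare against. Your outline reproduces the standard Low--Chuang proof strategy, and the necessity direction is essentially complete as sketched: block-diagonality over the eigenspaces of $W$, the degree/parity bookkeeping for the Laurent-polynomial entries under each $W_x R_z$ insertion, unitarity for condition (1), and evaluation at $\theta=0$ for condition (2). (One small imprecision there: at $\theta=0$ the product does not merely become ``a product of $R_z$ rotations'' --- a generic such product has $(0,0)$ entry $e^{-i\phi/2}$ with real part less than $1$ --- rather, each factor $R_z(\xi)\,W_x\,R_z(-\xi)$ collapses to the identity because the conjugating rotations cancel, which is what forces $\mathcal{A}(0)=1$ exactly.)

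The gap is in the sufficiency direction, and you have correctly located but not closed it. The peeling recursion you describe acts on the full $2\times 2$ unitary, i.e., on the quadruple $(\mathcal{A},\mathcal{B},\mathcal{C},\mathcal{D})$; it cannot even be initialized from the data $(\mathcal{A},\mathcal{C})$ alone. The actual mathematical content of the theorem is the \emph{completion} step: given $(\mathcal{A},\mathcal{C})$ satisfying (1)--(4), one must exhibit $\mathcal{B}$ (a cosine series) and $\mathcal{D}$ (a sine series), each of degree at most $L/2$, with $\mathcal{B}(0)=\mathcal{D}(0)=0$ and $\mathcal{A}^2+\mathcal{B}^2+\mathcal{C}^2+\mathcal{D}^2=1$ identically. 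This is a Fej\'er--Riesz-type factorization of the nonnegative trigonometric polynomial $1-\mathcal{A}^2-\mathcal{C}^2$ as $|\mathcal{B}+i\mathcal{D}|^2$, carried out by locating the roots of the associated Laurent polynomial and selecting them consistently with the required parity; condition (2) is exactly what guarantees the factorization can be anchored so that the residual vanishes at $\theta=0$. Your proposal names this obstacle and defers it to \cite{low2017optimal, gilyen2019optimizing}, which is the right reference but means the argument as written is a proof outline rather than a proof. Once the completion is in hand, the leading-coefficient peeling and the resulting $\mathcal{O}(\mathrm{poly}(L))$ runtime claim go through as you describe.
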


\subsection{Detail of operator transformation with QSP}
We now detail the construction of the approximate phase shifter $V_{\varphi, T}$ using QSP.
As the operator $W_z$, we employ a slight modification of 
${\rm c}Q=\ket{0}\bra{0}\otimes \bm{1}_s + \ket{1}\bra{1}\otimes Q$ as follows:
\begin{align}
    \label{eq_def_Wq}
    W_Q &:= {\rm c}Q \times (R_z(\pi/2) \otimes \bm{1}_s) \nonumber \\
        &= e^{-i\pi/4}{e^{i(-2\theta + \pi/2)/2}}
        \begin{pmatrix}
            e^{-i(-2\theta + \pi/2)/2} & 0 \\
            0 & e^{i(-2\theta + \pi/2)/2} \\
        \end{pmatrix}_b
        \otimes \ket{Q_+}\bra{Q_+}_s \nonumber \\
        &\quad  + e^{-i\pi/4}{e^{i(2\theta + \pi/2)/2}}
        \begin{pmatrix}
            e^{-i(2\theta + \pi/2)/2} & 0 \\
            0 & e^{i(2\theta + \pi/2)/2} \\
        \end{pmatrix}_b
        \otimes \ket{Q_-}\bra{Q_-}_s,
\end{align}
where we used the expression \eqref{eq_cQ expression} and omit all terms that act outside of the Grover plane. 
Note that the factor $R_z(\pi/2)$ is multiplied to ${\rm c}Q$ so that the transformation functions satisfy the conditions in Theorem~\ref{thm_QSP}. 
To approximate $\widetilde{V}_{\varphi, T}$ defined in Eq.~\eqref{eq_def_Vphi}, i.e., 
\begin{equation}
    \label{eq_def_Vphi suppl}
    \widetilde{V}_{\varphi, T}=\begin{pmatrix}
        e^{-iT\varphi/2}&0\\
        0&e^{iT\varphi/2}
    \end{pmatrix}_b \otimes \overline{\bm{1}}_s,
\end{equation}
we use QSP to construct $V_{\varphi, T} = V_{z, \vec{\xi}}$ of the form:
\begin{align}
\label{eq_def_Vph_detail 2}
    V_{\varphi, T} &= \prod^{L/2}_{l = 1} \left( R_x(\xi_{2l-1}') \otimes \bm{1}_s \right) W_Q^{\dagger} \left( R_x(-\xi_{2l-1}') \otimes \bm{1}_s \right)  \left( R_x(\xi_{2l}) \otimes \bm{1}_s \right) W_Q \left( R_x(-\xi_{2l}) \otimes \bm{1}_s \right) \nonumber \\
      &= \sum_{\sigma \in \{ +, - \}}
          \begin{pmatrix}
                     \mathcal{A}_T(\theta_{Q_\sigma}) + i\mathcal{C}_T(\theta_{Q_\sigma})  & i\mathcal{B}_T(\theta_{Q_\sigma}) - \mathcal{D}_T(\theta_{Q_\sigma}) \\
                     i\mathcal{B}_T(\theta_{Q_\sigma}) + \mathcal{D}_T(\theta_{Q_\sigma})  & \mathcal{A}_T(\theta_{Q_\sigma}) - i\mathcal{C}_T(\theta_{Q_\sigma})
          \end{pmatrix}_b
         \otimes  \ket{Q_\sigma}\bra{Q_\sigma}_s,
\end{align}
where $\sigma \in \{ +, - \}$ and $\theta_{Q_\pm} := \mp 2\theta + \pi/2$.
Also, $\vec{\xi}$ is a QSP hyperparameter and $\xi' = \xi + \pi$. 
The circuit structure of $V_{\varphi, T}$ is shown in Fig.~\ref{fig_circuit_Vph}.
Hence, to approximate $\widetilde{V}_{\varphi, T}$, it suffices to construct $V_{\varphi, T}$ such that
\begin{align}
    \mathcal{A}_T(\theta_{Q_\sigma}) \pm i\mathcal{C}_T(\theta_{Q_\sigma}) 
    = e^{\mp iT\sin{\theta_{Q_\sigma}}} \nonumber = e^{\mp iT\cos{(2 \theta})}.
\end{align}
As shown in Ref.~\cite{low2017optimal}, $e^{\mp iT\sin{\theta_{Q_\sigma}}}$ can be expressed via the Jacobi--Anger expansion:
\begin{align}
    e^{\mp iT\sin{\theta_{Q_\sigma}}} &= J_0(T) + 2\sum^{\infty}_{l \ {\rm even} > 0} J_l(T) \cos{(l\theta_{Q_\sigma})} \mp 2i \sum^{\infty}_{l \ {\rm odd} > 0} J_l(T) \sin{(l\theta_{Q_\sigma})},
\end{align}
where $J_l(T)$ denotes the Bessel function of the first kind of order $l$.
We define $\widetilde{A}_T(\theta_{Q_\sigma})$ and $\widetilde{C}_T(\theta_{Q_\sigma})$ as follows:
\begin{align}
    \label{eq_QSP_ACwidetilde}
    \widetilde{A}_T(\theta_{Q_\sigma}) = J_0(T) + 2\sum^{L/2}_{l \ {\rm even} > 0} J_l(T) \cos{(l\theta_{Q_\sigma})}, \quad
    i\widetilde{C}_T(\theta_{Q_\sigma}) = 2i \sum^{L/2}_{l \ {\rm odd} > 0} J_l(T) \sin{(l\theta_{Q_\sigma})}. 
\end{align}
Although $\widetilde{A}_T$ and $\widetilde{C}_T$ may not satisfy the conditions (1) and/or (2) in Theorem~\ref{thm_QSP}, they can be approximated by some functions $A_T$ and $C_T$ which satisfy these conditions~\cite{low2017optimal, low2019hamiltonian}.
Therefore, we can construct $V_{\varphi, T}$ in Eq.~\eqref{eq_def_Vph_detail 2} such that $\mathcal{A}_T(\theta_{Q_\sigma}) \pm i\mathcal{C}_T(\theta_{Q_\sigma}) ={A}_T(\theta_{Q_\sigma}) \pm i{C}_T(\theta_{Q_\sigma}) \approx e^{\mp iT\sin{\theta_{Q_\sigma}}}= e^{\mp iT\cos{(2 \theta})}$. 
We can control this approximation error by adjusting the truncation number $L$.

Based on the above discussion, we can take a pair of real functions $\mathcal{A}=A_T$, $\mathcal{C}=C_T$ satisfying all the conditions in Theorem~\ref{thm_QSP} and the following approximation
\begin{equation}
    \label{eq_AC-error}
    \left|A_T(\theta)\pm i C_T(\theta)-e^{\mp iT \sin(\theta)}\right|\leq \mathcal{O}(\delta)~~~\forall \theta
\end{equation}
for an error parameter $\delta$, which is explicitly specified later.
Thus, there exists a phase sequence $\vec{\xi}$ for the function pair $(A_T,C_T)$.
Under this choice, the corresponding functions $\mathcal{B}=B_T$ and $\mathcal{D}=D_T$ satisfy
\begin{equation}
    \label{eq_BD-error}
    \left|i{B}_T(\theta)\pm{D}_T(\theta)\right|^2= 1-\left|A_T(\theta)\pm i C_T(\theta)\right|^2\leq \mathcal{O}(\delta),
\end{equation}
where the first equality comes from the unitarity of any QSP circuit~\cite{low2017quantum}.
Therefore, our QSP circuit $V_{\varphi, T}$ with interleaving applications of controlled Grover operator ${\rm c}Q$ (more precisely, $W_Q$ in Eq.~\eqref{eq_def_Wq}) has the following action in the Grover plane:
\begin{align}
    \label{eq_def_Vph_detail_supp}
    V_{\varphi, T} &={\prod^{L/2}_{l = 1} \left( R_x(\xi_{2l-1} + \pi) \otimes \bm{1}_s \right) W_Q^{\dagger} \left( R_x(-\xi_{2l-1} - \pi) \otimes \bm{1}_s \right) \left( R_x(\xi_{2l}) \otimes \bm{1}_s \right) W_Q \left( R_x(-\xi_{2l}) \otimes \bm{1}_s \right)} \nonumber \\
    &= \sum_{\sigma \in \{ +, - \}} \begin{pmatrix}
                                            A_T(\theta_{Q_\sigma}) + iC_T(\theta_{Q_\sigma})  & iB_T(\theta_{Q_\sigma}) - D_T(\theta_{Q_\sigma}) \\
                                            iB_T(\theta_{Q_\sigma}) + D_T(\theta_{Q_\sigma})  & A_T(\theta_{Q_\sigma}) - iC_T(\theta_{Q_\sigma})
                                        \end{pmatrix}_b
                                        \otimes  \ket{Q_\sigma}\bra{Q_\sigma}_s \nonumber \\ 
    &\approx \sum_{\sigma \in \{ +, - \}}
        \begin{pmatrix}
            e^{-iT\sin (\theta_{Q_\sigma})} & 0 \\
            0 & e^{iT\sin (\theta_{Q_\sigma})} \\
        \end{pmatrix}_b
        \otimes \ket{Q_\sigma}\bra{Q_\sigma}_s \nonumber \\
    &= 
        \begin{pmatrix}
            e^{-iT\cos (2\theta)} & 0 \\
            0 & e^{iT\cos (2\theta)} \\
        \end{pmatrix}_b
        \otimes \sum_{\sigma \in \{ +, - \}} \ket{Q_\sigma}\bra{Q_\sigma}_s \nonumber \\
    &= 
        \begin{pmatrix}
            e^{-iT\varphi/2} & 0 \\
            0 & e^{iT\varphi/2} \\
        \end{pmatrix}_b
        \otimes \overline{\bm{1}}_s ~ =\widetilde{V}_{\varphi, T},
\end{align}
where $\varphi = 2\cos{(2\theta)} = 2(1-2a)$, and terms acting outside the Grover plane are again omitted.
The approximation in the third line comes from Eqs.~\eqref{eq_AC-error} and~\eqref{eq_BD-error}.
Here, $\overline{\bm{1}}_s$ is the identity operator on the Grover plane and has the spectral decomposition $\overline{\bm{1}}_s = \sum_{\sigma \in \{ +, - \}} \ket{Q_\sigma}\bra{Q_\sigma}_s$ for the orthogonal basis set $\{\ket{Q_+},\ket{Q_-}\}$ in the Grover plane.
Eq.~\eqref{eq_def_Vph_detail_supp} shows that $\widetilde{V}_{\varphi, T}$ can be implemented only approximately with a controllable accuracy $\delta$; we will derive how the number $L$ scales in the approximation error $\delta$ in Sec.~\ref{sec_proof_oc}.

\section{Proof of query complexity for constructing $V_{\varphi, T}$}
\label{sec_proof_oc}

Here, we provide the detailed proof of Lemma~\ref{lm_operator_transformation}, showing the error between $V_{\varphi, T}$ and $\widetilde{V}_{\varphi, T}$ on specific vectors $\ket{0}_b \ket{0}^{\otimes n}_s$ and $\ket{1}_b \ket{0}^{\otimes n}_s$.
We also discuss the effect of the approximation error in $V_{\varphi, T}$ when it is applied $S$ times sequentially instead of increasing $T$. 

\setcounter{lemma}{0}
\begin{lemma}
[Query complexity for constructing $V_{\varphi, T}$]
For any oracle conversion error $\varepsilon_{\rm oc}\in (0, 1)$ and 
any $j \in \{0, 1\}$,
there exists a quantum algorithm that constructs an operator $V_{\varphi, T}$ such that 
\begin{equation}
\label{lemma1 ineq}
   \Big\lVert \left( V_{\varphi, T} - \widetilde{V}_{\varphi, T} \right) \ket{j}_b \ket{0}^{\otimes n}_s \Big\rVert < \varepsilon_{\rm oc},
\end{equation}
using ${\rm c}Q$ and ${\rm c}Q^{\dagger}$ a total of $L = \mathcal{O}(T + \log(1/\varepsilon_{\rm oc}))$ times.
\end{lemma}

\mbox{}

\begin{proof}[Proof of Lemma~\ref{lm_operator_transformation}] 
As shown in Sec.~\ref{sec_operator_transformation_detail}, using QSP, we can construct $V_{\varphi, T}$ that approximates $\widetilde{V}_{\varphi, T}$.
The construction of $V_{\varphi, T}$ involves two approximations. 
The first is the approximation of $e^{\mp iT\sin{\theta_{Q_\sigma}}}$ by the $L/2$-order Fourier series $\widetilde{A} \pm i\widetilde{C}$ defined in Eq.~\eqref{eq_QSP_ACwidetilde}.
The error caused by this approximation is upper-bounded as follows for any $\theta_{Q_\sigma}$~\cite{low2017optimal}:
\begin{align}
    \left| \widetilde{A}_T(\theta_{Q_\sigma}) \pm i\widetilde{C}_T(\theta_{Q_\sigma}) - e^{\mp iT\sin{\theta_{Q_\sigma}}} \right| =: \delta  &\le \cfrac{4T^{L/2 + 1}}{2^{L/2 + 1}(L/2 + 1)!} \label{eq_delta_1} \\
    &< \cfrac{4}{e^{1/(6L+13)}\sqrt{2\pi(L/2 + 1)}}  \left( \cfrac{eT}{L + 2} \right)^{L/2+1} \nonumber \\
    &< 1.1 \left( \cfrac{eT}{L + 2} \right)^{L/2+1}, \label{eq_delta_2} 
\end{align}
where $\sigma \in \{+, -\}$, and we used Stirling's approximation $(L/2 + 1)! > e^{1/(6L+13)}\sqrt{2\pi(L/2 + 1)}\left( \frac{L/2+1}{e} \right)^{L/2+1}$.
In the following discussion, we assume $\delta \in (0, 1)$ and $\left( eT/(L+2) \right)^{L/2+1}\in (0, 1]$.
The second approximation is the replacement of $\widetilde{A}$ and $\widetilde{C}$ with the achievable functions $A$ and $C$ that satisfy all the conditions in Theorem~\ref{thm_QSP}. 
Using the technique shown in Ref.~\cite{low2017optimal}, we can construct such $A$ and $C$ satisfying the following inequality for any $\theta_{Q_\sigma}$~\cite{low2017optimal}, in terms of the definition of $\delta$ given in Eq.~\eqref{eq_delta_1}:
\begin{align}
\label{eq_8 delta}
    \left| A_T(\theta_{Q_\sigma}) \pm iC_T(\theta_{Q_\sigma}) - e^{\mp iT\sin{\theta_{Q_\sigma}}} \right| &\le 8\delta.
\end{align}

Here, we express $V_{\varphi, T}$ as follows:
\begin{align}
\label{eq_V FG expression}
    V_{\varphi, T}  &= \sum_{\sigma \in \{ +, - \}}   \begin{pmatrix}
                                            A_T(\theta_{Q_\sigma}) + iC_T(\theta_{Q_\sigma})  & iB_T(\theta_{Q_\sigma}) - D_T(\theta_{Q_\sigma}) \\
                                            iB_T(\theta_{Q_\sigma}) + D_T(\theta_{Q_\sigma})  & A_T(\theta_{Q_\sigma}) - iC_T(\theta_{Q_\sigma})
                                        \end{pmatrix}_b
                                        \otimes  \ket{Q_\sigma}\bra{Q_\sigma}_s \nonumber \\
          &:= \sum_{\sigma \in \{ +, - \}}   \begin{pmatrix}
                                            \mathcal{F}_{0, \sigma, T}  & i\mathcal{G}_{0, \sigma, T} \\
                                            i\mathcal{G}_{1, \sigma, T} & \mathcal{F}_{1, \sigma, T} \\
                                        \end{pmatrix}_b 
                                        \otimes \ket{Q_\sigma} \bra{Q_\sigma}_s.
\end{align}
Then, from Eqs. \eqref{eq_def_Vphi suppl} and \eqref{eq_V FG expression}, 
the following inequality holds:
\begin{align}
    \label{eq_Vph_norm-error_Qsigma}
    \left\lVert(V_{\varphi, T} - \widetilde{V}_{\varphi, T})\ket{j}_b \ket{Q_\sigma}_s \right\rVert 
    &= \left\lVert (\mathcal{F}_{j, \sigma, T} - e^{(-1)^{j+1}iT\sin{\theta_{Q_\sigma}}}) \ket{j}_b\ket{Q_\sigma}_s + i\mathcal{G}_{j', \sigma, T} \ket{j'}_b\ket{Q_\sigma}_s \right\rVert \nonumber \\
    &= \left\lVert (\mathcal{F}_{j, \sigma, T} - e^{(-1)^{j+1}iT\sin{\theta_{Q_\sigma}}}) \ket{j}_b + i\mathcal{G}_{j', \sigma, T} \ket{j'}_b \right\rVert \nonumber \\
    &= \sqrt{\left| \mathcal{F}_{j, \sigma, T} - e^{i T\phi_j} \right|^2 + \left| \mathcal{G}_{j', \sigma, T} \right|^2} \nonumber \\
    &\le \left| \mathcal{F}_{j, \sigma, T} - e^{i T\phi_j} \right| + \left| \mathcal{G}_{j', \sigma, T} \right| \nonumber \\
    &< 8\delta + \sqrt{16\delta - 64 \delta^2},
\end{align}
where $j' \in \{0, 1\}, j' \ne j$, and $(-1)^{j+1}\sin{\theta_{Q_\sigma}} = (-1)^{j+1}\cos{2\theta} := \phi_j$.
To derive the rightmost inequality, we used Eq.~\eqref{eq_8 delta}, i.e., 
$\left| \mathcal{F}_{j, \sigma, T} - e^{i T\phi_j} \right| \le 8\delta$, and 
$\left| \mathcal{F}_{j, \sigma, T} \right|^2 + \left| \mathcal{G}_{j', \sigma, T} \right|^2 = 1$, which further lead to
\begin{align}
    &1 - \left| \mathcal{F}_{j, \sigma, T} \right| \le \left| \mathcal{F}_{j, \sigma, T} - e^{i T\phi_j} \right| \le 8\delta ~~
    \Longrightarrow ~~ 1 - 8\delta \le \left| \mathcal{F}_{j, \sigma, T} \right| ~~ \Longrightarrow ~~ \left| \mathcal{G}_{j', \sigma, T} \right| \le \sqrt{16\delta - 64 \delta^2}.
\end{align}
From Eq.~\eqref{eq_Vph_norm-error_Qsigma}, the following inequality holds:
\begin{align}
    \left\lVert (V_{\varphi, T} - \widetilde{V}_{\varphi, T})\ket{j}_b \ket{0}^{\otimes n}_s  \right\rVert 
    &\le \cfrac{1}{\sqrt{2}} \left( \left\lVert(V_{\varphi, T} - \widetilde{V}_{\varphi, T}) \ket{j}_b \ket{Q_+}_s  \right\rVert + \left\lVert(V_{\varphi, T} - \widetilde{V}_{\varphi, T}) \ket{j}_b \ket{Q_{-}}_s\right\rVert \right) \nonumber \\
    &< \cfrac{2}{\sqrt{2}} \left( 8\delta + \sqrt{16\delta - 64 \delta^2} \right) \label{eq_Vph_norm-error_1} \\ 
    &< 17 \sqrt{\delta}. \label{eq_Vph_norm-error_2}
\end{align}
Based on Eqs.~\eqref{eq_delta_2}, \eqref{eq_Vph_norm-error_2}, we then have 
\begin{equation*}
     17\sqrt{\delta} < 17\sqrt{1.1}\left( \cfrac{eT}{L + 2} \right)^{L/4+1/2} 
       < 18\left( \cfrac{eT}{L + 2} \right)^{L/4+1/2}.
\end{equation*}
Hence, to ensure that $\big\lVert(V_{\varphi, T} - \widetilde{V}_{\varphi, T})\ket{j}_b \ket{0}^{\otimes n}_s  \big\rVert \le \varepsilon_{\rm oc}$, it suffices that $18\left( eT/(L+2) \right)^{L/4+1/2} \le \varepsilon_{\rm oc}$. 
According to Ref.~\cite{gilyen2019quantum}, this inequality is satisfied when 
\begin{align}
    L = 2 \left\lceil \cfrac{e^2 T + 4 \log{(18/\varepsilon_{\rm oc})}-2}{2} \right\rceil. 
\end{align}
Therefore, by setting
\begin{align}
\label{eq_L}
    L = 2 \left\lceil \cfrac{e^2 T + 4 \log(1/\varepsilon_{\rm oc}) + 10}{2} \right\rceil = \mathcal{O}(T + \log(1/\varepsilon_{\rm oc})),
\end{align}
the inequality $\big\lVert ( V_{\varphi, T} - \widetilde{V}_{\varphi, T} ) \ket{j}_b \ket{0}^{\otimes n}_s \big\rVert < \varepsilon_{\rm oc}$ holds.

\end{proof}

We now provide a proof of the following inequality presented in Appendix~\ref{sec_pseudocode}: if Eq.~\eqref{lemma1 ineq} holds, then for any positive integer $S$ we have
\begin{align}
    \Big\lVert \left( V_{\varphi, T}^S - \widetilde{V}_{\varphi, T}^S \right) \ket{j}_b \ket{0}^{\otimes n}_s  \Big\rVert < S\varepsilon_{\rm oc}.
\end{align}

\begin{proof}
\begin{align}
    \left\lVert (V_{\varphi, T}^S - \widetilde{V}_{\varphi, T}^S) \ket{j}_b \ket{0}^{\otimes n}_s  \right\rVert &= \left\lVert \sum_{k=0}^{S-1} V_{\varphi, T}^{S-k-1} \left( V_{\varphi, T} - \widetilde{V}_{\varphi, T} \right) \widetilde{V}_{\varphi, T}^{k} \ket{j}_b \ket{0}^{\otimes n}_s  \right\rVert \label{eq_V^S_error} \\
    &\le \sum_{k=0}^{S-1} \left\lVert V_{\varphi, T}^{S-k-1} \left( V_{\varphi, T} - \widetilde{V}_{\varphi, T} \right) \widetilde{V}_{\varphi, T}^{k} \ket{j}_b \ket{0}^{\otimes n}_s  \right\rVert \nonumber \\
    &= \sum_{k=0}^{S-1} \left\lVert \left( V_{\varphi, T} - \widetilde{V}_{\varphi, T} \right) e^{ik T\phi_j} \ket{j}_b \ket{0}^{\otimes n}_s  \right\rVert \nonumber \\
    &= S \times \left\lVert (V_{\varphi, T} - \widetilde{V}_{\varphi, T})\ket{j}_b \ket{0}^{\otimes n}_s  \right\rVert \nonumber \\
    &< S\varepsilon_{\rm oc}.
\end{align}
\end{proof}

\section{Classical post-processing in robust phase estimation}
\label{sec_RPE}
We describe the classical post-processing procedure of robust phase estimation (RPE)~\cite{higgins2009demonstrating, kimmel2015robust, belliardo2020achieving}.
Throughout this section, we assume $\varphi \in [-\pi, \pi)$ to describe the general RPE procedure, whereas PAE assumes $\varphi \in [-2, 2]$.
Given the quantum circuit measurement outcomes $\{f_{+,k}\}_{k=1}^K$ and $\{f_{i,k}\}_{k=1}^K$, the following procedure is executed for $k = 1, 2, ..., K$ to estimate $\varphi$.
Hereafter, $\varphi'$ and $\widehat{\varphi}'$ denote the values of $\varphi$ and $\widehat{\varphi}$ mapped from $[-\pi, \pi)$ to $[0, 2\pi)$.

\begin{enumerate}
    \item Derive estimate $\widehat{M_k\varphi'_{k}} := {\rm atan2}(2f_{i, k} - 1, 2f_{+, k} - 1) \in [0, 2\pi)$, where $M_k = 2^{k-1}$.
    \item Calculate $\widehat{\varphi}'_{k, 0} := \widehat{M_k\varphi'_{k}} / M_k  \in [0, 2\pi/M_k)$.
          As shown in Fig.~\ref{fig_RPE_phi_est}, $\widehat{\varphi}'_{k, 0}$ represents the smallest candidate for $\widehat{\varphi}'_k$ in the range $[0, 2\pi)$.
    \item If $k = 1$, adopt $\widehat{\varphi}'_{0, 1}$ as $\widehat{\varphi}'_{1}$. \par
          If $k > 1$, select $\widehat{\varphi}'_k$ from the candidate estimates $\{ \widehat{\varphi}'_{k, m} = \widehat{\varphi}'_{k, 0} + m\pi / 2^{k-2} \}_{m=-1}^{2^{k-1}}$  based on the previous estimate $\widehat{\varphi}'_{k-1}$.
          First, compute the partition index $\eta := \left\lfloor \widehat{\varphi}'_{k-1} / 2^{-k+2}\pi \right\rfloor \in \{0, 1,..., 2^{k-1}-1 \}$, which identifies the partition in which $\widehat{\varphi}'_{k-1}$ lies (see Fig.~\ref{fig_RPE_phi_est}).
          Then, select $\widehat{\varphi}'_{k}$ from the candidate estimates corresponding to the partition indices $\eta - 1$, $\eta$, and $\eta + 1$ whose confidence intervals, defined as the estimate $\pm \pi / 3 \times 2^{k-1}$, overlap with that of $\widehat{\varphi}'_{k-1}$.
    \item Map $\widehat{\varphi}'_{k}$ onto $[-\pi, \pi)$ to obtain $\widehat{\varphi}_{k}$.        
\end{enumerate}
The final estimate $\widehat{\varphi}$ is given by $\widehat{\varphi}_K$.
Figure~\ref{fig_RPE_phi_est} schematically illustrates the above estimation procedure.

\begin{figure}[H]
    \centering
    \includegraphics[width=\textwidth]{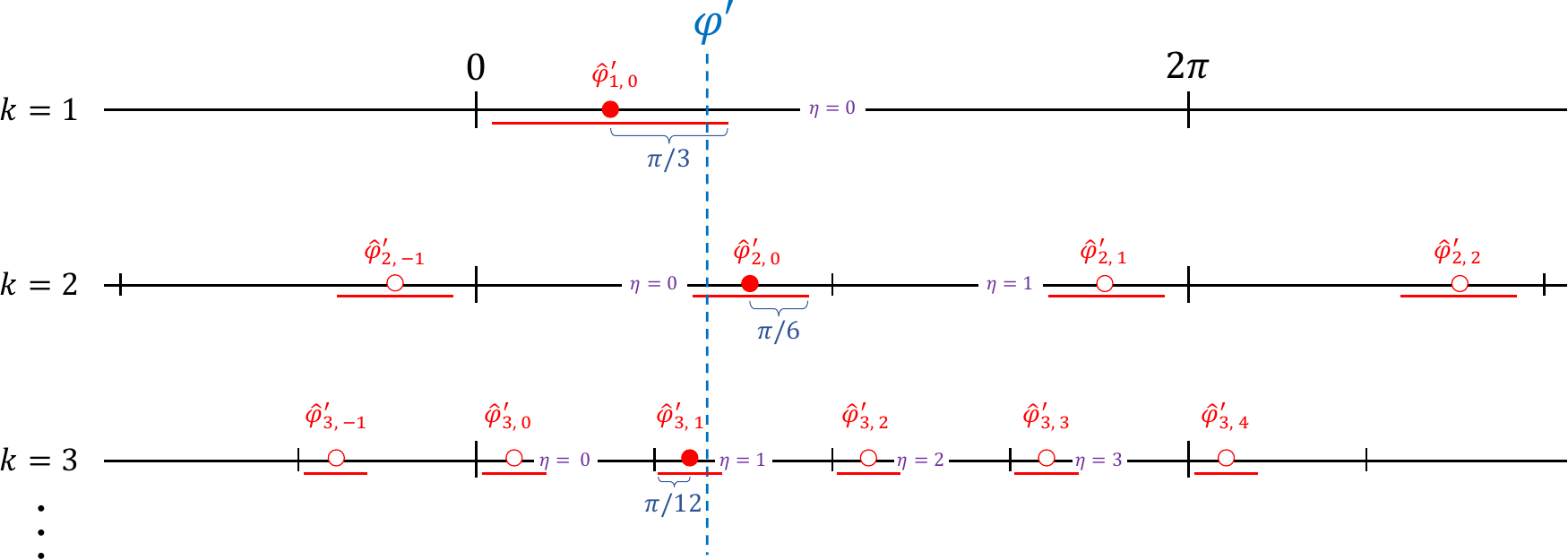}
    \caption{
        Schematic diagram of the step-by-step estimation of $\varphi$ in RPE.
        The filled red circles indicate the estimates adopted as $\widehat{\varphi}'_{k}$, that is, in this example, we have $\widehat{\varphi}'_{1} = \widehat{\varphi}'_{1, 0}$, $\widehat{\varphi}'_{2} = \widehat{\varphi}'_{2, 0}$ and $\widehat{\varphi}'_{3} = \widehat{\varphi}'_{3, 1}$.
    }
    \label{fig_RPE_phi_est}
\end{figure}

Below, we restate the pseudocode for the classical post-processing presented in Appendix~\ref{sec_pseudocode}.

\setcounter{algorithm}{1}
\begin{algorithm}[H]
    \caption{Robust phase estimation (classical post-processing part)}
    \label{alg_RPE}
    \begin{algorithmic}[1]
    \Require Max. number of steps $K$, Observed probabilities $\{f_{+,k}\}_{k=1}^K$, $\{f_{i,k}\}_{k=1}^K$
    \Ensure Estimate $\widehat{\varphi} \in [-\pi, \pi) $
    \For{$k = 1, 2,..., K$}
        \State $M_k=2^{k-1}$
        \State $\widehat{M_k\varphi_{k}'} \gets {\rm atan2}(2f_{i, k} - 1, 2f_{+, k} - 1) \in [0, 2\pi)$
        \State $\widehat{\varphi}'_{k, 0} = \widehat{M_k\varphi_{k}'} / M_k  \in [0, 2\pi/M_k)$.
        \If {$k = 1$}
            \State $\widehat{\varphi}'_1 \gets \widehat{\varphi}'_{1, 0}$
        \Else
            \State $\eta \gets \left\lfloor \cfrac{\widehat{\varphi}'_{k-1}}{\pi/2^{k-2}} \right\rfloor$
            \If{$\widehat{\varphi}'_{k-1} - \left( \widehat{\varphi}'_{k, 0} + (\eta - 1)\pi / 2^{k-2} \right) \le \pi/2^{k-1}$}
                \State $\widehat{\varphi}'_k \gets \widehat{\varphi}'_{k, 0} + (\eta - 1)\pi / 2^{k-2}$
            \ElsIf{$\left( \widehat{\varphi}'_{k, 0} + (\eta + 1)\pi / 2^{k-2} \right) - \widehat{\varphi}'_{k-1} < \pi/2^{k-1}$}
                \State $\widehat{\varphi}'_k \gets \widehat{\varphi}'_{k, 0} + (\eta + 1)\pi / 2^{k-2}$
            \Else
                \State $\widehat{\varphi}'_k \gets \widehat{\varphi}'_{k, 0} + \eta\pi / 2^{k-2}$
            \EndIf
        \EndIf
    \State $\widehat{\varphi}_k \gets \widehat{\varphi}'_k - 2\pi \left\lfloor \cfrac{\widehat{\varphi}'_k + \pi}{2\pi} \right\rfloor$
    \EndFor
    \State $\widehat{\varphi} \gets \widehat{\varphi}_{K}$    
    \end{algorithmic}
\end{algorithm}

\section{Full proof of theorem 1}
\label{sec_proof_theorem1}
Here, we show the full proof of Theorem~\ref{thm_PAE}.
For completeness, we first recall Lemma~\ref{lm_RPE_RMSE} from the main text, which is used in the proof, and then present the proof.

\begin{lemma}
[MSE upper bound of RPE~\cite{belliardo2020achieving}]
\label{lm_RPE_RMSE}
Suppose the measurement bias parameters $\{\beta_{r, k}\}$ satisfy $\sup_{r, k} \{|\beta_{r, k}|\} := \beta < \sqrt{6}/8$.
Then, the RPE procedure (i)--(ii) returns the phase estimate $\hat{\varphi} \in (-\pi, \pi]$ such that its mean squared error (MSE) satisfies
\begin{align}
    \label{eq_eps_upper-bound_supp}
    \mathbb{E}\left[ (\hat{\varphi} - \varphi)^2 \right] \le \left( \cfrac{2\pi}{3} \right)^2 \left( \cfrac{1}{4^K} + \sum_{k=1}^K \cfrac{e^{-2 \nu_k (\sqrt{6}/8 - \beta)^2}}{4^{k-4}} \right).
\end{align}
\end{lemma}

\setcounter{theorem}{0}
\begin{theorem}[Parallel amplitude estimation; general case]
    \label{thm_PAE}
    Let $\varepsilon\in (0,1)$, and let $P$ be any positive integer.
    There exists a quantum algorithm that estimates $a \in [0, 1]$ encoded in $U_a$ (Eq.~\eqref{eq:quantum_oracle_qae}) within the RMSE $\varepsilon$, using $N=\mathcal{O}\left(1/\varepsilon + P \log{P}\right)$ queries to $U_a$ and $U_a^\dagger$ in total.
    This quantum algorithm uses $P(n+1)$-qubit quantum circuits with the structure depicted in Fig.~\ref{fig_circuit_overview} and the maximum circuit depth of $\mathcal{O}(1/(\varepsilon P) + \log P)$.
\end{theorem}

\textit{Proof of Theorem~\ref{thm_PAE}.} 
The goal is, in the framework of RPE, to compute the necessary resources (circuit depth and width) such that the right hand side of Eq.~\eqref{eq_eps_upper-bound_supp} in Lemma~\ref{lm_RPE_RMSE} is at most $\varepsilon^2$. 
The condition in Lemma~\ref{lm_RPE_RMSE} on the measurement bias, $|\beta_{r, k}| < \beta$, is related to the approximation error of $V_{\varphi, T}$, which allows us to identify the necessary circuit depth from Lemma~\ref{lm_operator_transformation}. 
Hence, let us begin by evaluating the state error.

When $V_{\varphi, T_k}$ is applied in parallel to $P_k$ systems in the same manner as Fig.~\ref{fig_circuit_overview}, the following inequality holds by a telescoping sum:
\begin{align}
    \Big\lVert \left( V_{\varphi, T_k}^{\otimes P_k} - \widetilde{V}_{\varphi, T_k}^{\otimes P_k} \right) \ket{{\rm GHZ}_{P_k}}_b \ket{0}^{\otimes n P_k}_s \Big\rVert &\le \sqrt{2} \max_{j=0,1}\Big\lVert \left( V_{\varphi, T_k}^{\otimes P_k} - \widetilde{V}_{\varphi, T_k}^{\otimes P_k} \right) \ket{j}_b^{\otimes P_k} \ket{0}^{\otimes n P_k}_s \Big\rVert \nonumber \\
    &\le \sqrt{2} P_k \max_{j=0,1} \Big\lVert (V_{\varphi, T_k} - \widetilde{V}_{\varphi, T_k}) \ket{j}_b \ket{0}^{\otimes n}_s \Big\rVert.
\end{align}
In addition, for $\ket{\widetilde{\Psi}(M_k)} := \widetilde{V}_{\varphi, T_k}^{\otimes P_k} \ket{{\rm GHZ}_{P_k}}_b \ket{0}^{\otimes n P_k}_s$ and its approximation $\ket{\Psi(M_k)}$, we have 
\[
    \mathfrak{D}(\ket{\Psi(M_k)}, \ket{\widetilde{\Psi}(M_k)}) \le \lVert \ket{\Psi(M_k)} - \ket{\widetilde{\Psi}(M_k)} \rVert, 
\]
where $\mathfrak{D}(\ket{\Psi(M_k)}, \ket{\widetilde{\Psi}(M_k)})$ is the trace distance between these states.
Now, we connect this state error to the bias error $\beta_{r, k}$ in measuring the states $\ket{\Psi(M_k)}$ and $\ket{\widetilde{\Psi}(M_k)}$. 
Specifically, from the result that the trace distance between two quantum states upper bounds the total variation distance for any POVM \cite{nielsen2010quantum}, we have $|\beta_{r, k}| \le \mathfrak{D}(\ket{\Psi(M_k)}, \ket{\widetilde{\Psi}(M_k)})$.
Therefore, 
\begin{align}
    |\beta_{r, k}| &\le \sqrt{2} P_k \max_j \big\lVert (V_{\varphi, T_k} - \widetilde{V}_{\varphi, T_k}) \ket{j}_b \ket{0}^{\otimes n}_s \big\rVert \nonumber \\
    &< \sqrt{2} P_k \varepsilon_{\rm oc}.
\end{align}
According to Eq.~\eqref{eq_L}, by setting $\varepsilon_{\rm oc} = \beta / (\sqrt{2}P_k)$ and
\begin{align}
    \label{eq_Lk}
    L_k = 2 \left\lceil \cfrac{e^2 T_k + 4 \log(\sqrt{2}P_k/\beta) + 10}{2} \right\rceil, 
\end{align}
with $\beta \in (0, \sqrt{6}/8)$ in Lemma~\ref{lm_operator_transformation}, we have $|\beta_{r, k}| < \beta$. 
Hence, the MSE of $\varphi$ is upper bounded by Eq.~\eqref{eq_eps_upper-bound_supp} in Lemma~\ref{lm_RPE_RMSE}.

Next, we upper bound the MSE of $\varphi$.
Substituting 
\begin{align}
    K &= \lceil \log_2(1/\varepsilon) \rceil + 6, \\
    \label{eq_nuk}
    \nu_k &= 1 + \left\lceil \cfrac{\log{6}}{2(\sqrt{6}/8-\beta)^2} (K - k) \right\rceil, 
\end{align}
into the inequality in Eq.~\eqref{eq_eps_upper-bound_supp} yields the following inequality:
\begin{align}
    \mathbb{E}\left[ (\hat{\varphi} - \varphi)^2 \right] &\le \left( \cfrac{2\pi}{3} \right)^2 \left( \cfrac{1}{4^K} + \sum_{k=1}^K \cfrac{e^{-2 \nu_k (\sqrt{6}/8 - \beta)^2}}{4^{k-4}} \right) \nonumber \\
    &\le \left( \cfrac{2\pi}{3} \right)^2 \left( \cfrac{1}{4^K} + \sum_{k=1}^K \cfrac{e^{ (k - K)\log 6 - 2(\sqrt{6}/8 - \beta)^2 }}{4^{k-4}} \right) \nonumber \\
    &= \left( \cfrac{2\pi}{3} \right)^2 \left( \cfrac{1}{4^K} + \cfrac{768}{4^K}\left( 1 - \left(2/3\right)^{K} \right) e^{-2(\sqrt{6}/8 - \beta)^2} \right) \nonumber \\
    &< \cfrac{769}{4^K} \left( \cfrac{2\pi}{3} \right)^2 \nonumber \\ 
    &< \varepsilon^2.
\end{align}
Since $\varphi = 2(1-2a)$, we obtain $\mathbb{E}\left[ (\hat{a} - a)^2 \right] \le \mathbb{E}\left[ (\hat{\varphi} - \varphi)^2 \right]$ and thus $\sqrt{ \mathbb{E} \left[ (\hat{a} - a)^2 \right]} < \varepsilon$.

Then, we upper bound the total number $N$ of queries to the operator $U_a$ and $U_a^\dagger$.
Below, we consider $P \in \mathbb{Z}_{\ge 1}$ as an upper bound on the degree of the parallelism, and set $P_K = \min\{2^{K-1}, 2^{\lfloor \log_2 P \rfloor}\}$ (i.e., increase the degree of parallelism as much as possible).
Since the operator $V_{\varphi, T_k}$ contains a total of $L_k+2$ queries to $U_a$ and $U_a^\dagger$ as presented in Appendix~\ref{sec_operator_transformation}, total query $N$ for PAE is $N = 2 \sum_{k=1}^{K} \nu_k (L_k + 2) P_k$.
Here, the prefactor $2$ accounts for the two measurement settings $r\in\{+,\;i\}$ used in RPE.
By Eq.~\eqref{eq_Lk} and the RPE constraint $M_k = P_kT_k = 2^{k-1}$, we can upper bound $N \le 2 \sum_{k=1}^{K} \nu_k \left( e^2 2^{k-1} + 4 P_k\log(\sqrt{2}P_k/\beta) + 14P_k\right)$, and
from Eq.~\eqref{eq_nuk}, $\nu_k$ decreases monotonically as $k$ increases.
Therefore, to obtain an upper bound on $N$ under the constraint $P_K = \min\{2^{K-1}, 2^{\lfloor \log_2 P \rfloor}\}$, we choose $T_k$ and $P_k$ as follows:
\begin{align}
    \label{eq_Tk_Pk}
    T_k &= \begin{cases}
            2^{k-1} & ( k \in \{ 1, 2, ..., K_T \} ), \\
            2^{K_T-1} & ( k \in \{K_T + 1, K_T + 2, ..., K \} ),
          \end{cases}\nonumber \\
    P_k &=  \begin{cases}
                1 & ( k \in \{ 1, 2, ..., K_T \} ), \\
                2^{k-K_T} & ( k \in \{K_T + 1, K_T + 2, ..., K \} ),
              \end{cases}
\end{align}
where $K_T := \max\{1, K - \lfloor \log_2 P \rfloor\}$.
Substituting $T_k$, $P_k$, and $L_k$ from Eq.~\eqref{eq_Lk}, as well as $\nu_k$ from Eq.~\eqref{eq_nuk}, into $N$, we obtain
\begin{align}
    \label{eq_order_N}
    N &= 2 \sum_{k=1}^{K} \nu_k (L_k + 2) P_k  \nonumber \\
    &= 2 \sum_{k=1}^{K} \nu_k (L_k + 2) \cfrac{2^{k-1}}{T_k}\nonumber \\
    &\le 2 \sum_{k=1}^{K_T} \left\{ \gamma(K-k) + 2 \right\} \left\{ e^2 2^{k-1} + 4\log(\sqrt{2}/\beta)  + 14 \right\} \nonumber \\
    &\qquad + 2 \sum_{k=K_T+1}^{K} \left\{ \gamma(K-k) + 2 \right\} 2^{k-K_T} \left\{ e^2 2^{K_T-1} + 4\log{(\sqrt{2}\times2^{k-K_T}/\beta)} + 14 \right\} \nonumber \\
    &\lesssim 2^K + 2^{K-K_T}(K - K_T) + {\rm poly}(K) \nonumber \\
    &= \mathcal{O}\left( \cfrac{1}{\varepsilon} + P \log P \right),
\end{align}
where $\gamma := \frac{\log 6}{2(\sqrt{6}/8-\beta)^2}$, and we note that $\max_k P_k\leq P$ holds.

Finally, we consider the maximum depth and number of qubits. 
Since Eq.~\eqref{eq_Tk_Pk} shows that $T_k$ and $P_k$ attain their largest values at $k=K$, it suffices to evaluate depth and number of qubits at $k=K$.
From $T_K = \max \{1, 2^{K-\lfloor\log_2{P}\rfloor - 1}\}$, we have $L_K \le e^2 T_K + 4\log(\sqrt{2}P_K/\beta) + 12$, and since $\log P_K \le \log P$, it follows that $L_K = \mathcal{O}(1/(\varepsilon P) + \log P)$.
Therefore, the depth of $V_{\varphi,T_K}$ is $\mathcal{O}(L_K)=\mathcal{O}(1/(\varepsilon P) + \log P)$.
In addition, $\ket{{\rm GHZ}_{P}}$ can be constructed from $\ket{0}^{\otimes P}$ with the $\log P$-depth circuit~\cite{cruz2019efficient, mooney2021generation}.
Consequently, the total depth of the circuit is $\mathcal{O}(1/(\varepsilon P) + \log P)$.
Finally, at most $P$ instances of an $(n + 1)$-qubit system are arranged in parallel, thus the maximum number of qubits is $P(n+1)$. 
\hfill\qedsymbol

\section{Details of numerical experiment}
\label{sec_detail_numerical_experiment}
\subsection{Details of the numerical experiments for query and depth evaluation}
\label{sec_detail_numerical_experiment_query_complexity}
Here we provide details of the numerical experiment setup for the query complexity evaluation in the main text Fig.~\ref{fig_graph_query_comp}.
We set $n=2$ and estimated RMSE over 100 trials for $a \in \{ 0, \sin^2{(\pi/8)} \}$ and $K \in \{1, 2, ..., 9\}$.
As for the choice of $P_k$ and $T_k$, we considered the two cases: 
(i)~\textit{Full parallel:} fix $T_k = 1~\forall k$ and set $P_k=2^{k-1}$, and 
(ii)~\textit{Full sequential:} fix $P_k = 1~\forall k$ and set $T_k=2^{k-1}$. 
In case (i), we used Qiskit~\cite{qiskit2024} for quantum circuit simulation and a Python library~\cite{chao2020finding, chao2020finding_github, chuang2021pyqsp_github, martyn2021grand} to compute the QSP hyperparameters. 
In case (ii), the estimation was performed by sampling from the measurement distribution that assumes ideal operator transformations (i.e., $V_{\varphi} = \widetilde{V}_{\varphi}$). 
In both cases, we chose the measurement schedule $\nu_k$ as the RPE-optimized schedule $\nu_k = \lfloor 4.0835(K - k) + \nu_K\rceil$ with $\nu_K \in \{7, 18\}$~\cite{belliardo2020achieving}. 
Although this schedule is optimized under the assumption that the bias $\beta$ in the measurement probability is zero, we chose $L_k$ sufficiently large to make the effect of this bias negligible; in particular, we chose $L_k$ such that $|\beta_{r,k}| \le 0.05$.
Details of the $L_k$ setting are provided in SM Sec.~\ref{sec_experiment_oc_error}.

\subsection{Numerical evaluation of the approximation error in $V_{\varphi, T}$ and its impact on estimation}
\label{sec_experiment_oc_error}
In this section, we present results that show how to choose $L_k$ so that the effect of the approximation error in $V_{\varphi, T_k}$ on estimating $a$ is negligible.

First, we examined the relationship between the measurement probability bias $\beta$ and the query complexity.
Figure~\ref{fig_graph_query_comp_beta} shows the query complexity obtained from numerical experiments for various values of $\beta \in \{0.00, 0.05, 0.10, 0.15, 0.20, 0.25, 0.30 \}$.
In this experiment, we assumed $\beta_{+, k} = \beta_{i, k} = \beta$.
To compute $N$, we set $L_k = 1$ for all $k$.
We performed the estimation by sampling measurement outcomes according to the probabilities $p_{+,k} = (1 + \cos{M_k \varphi})/2 + \beta$ and $p_{i,k} = (1 + \sin{M_k \varphi}) / 2 + \beta$.
We evaluated $\varepsilon$ by performing 100 estimation trials for each $a \in \{0.00, 0.01, ..., 1.00\}$, and computed the average ($\varepsilon_{\rm avg}$) and maximum ($\varepsilon_{\rm max}$) values of $\varepsilon$.
All other parameters were set as in Sec.~\ref{sec_detail_numerical_experiment_query_complexity}, using $\nu_k = \lfloor 4.0835(K - k) + \nu_K\rceil$, $\nu_K \in \{7, 18\}$, and $K \in \{1, 2, ..., 9 \}$.
Based on the result in Fig.~\ref{fig_graph_query_comp_beta}, when  $\beta = 0.05$, the estimation accuracy is comparable to that achieved when $\beta = 0$, even though the settings of $\nu_k$ and $\nu_K$ are the same (i.e., the bias is not taken into account when configuring these parameters).
\begin{figure}[H]
    \centering
    \includegraphics[width=\columnwidth]{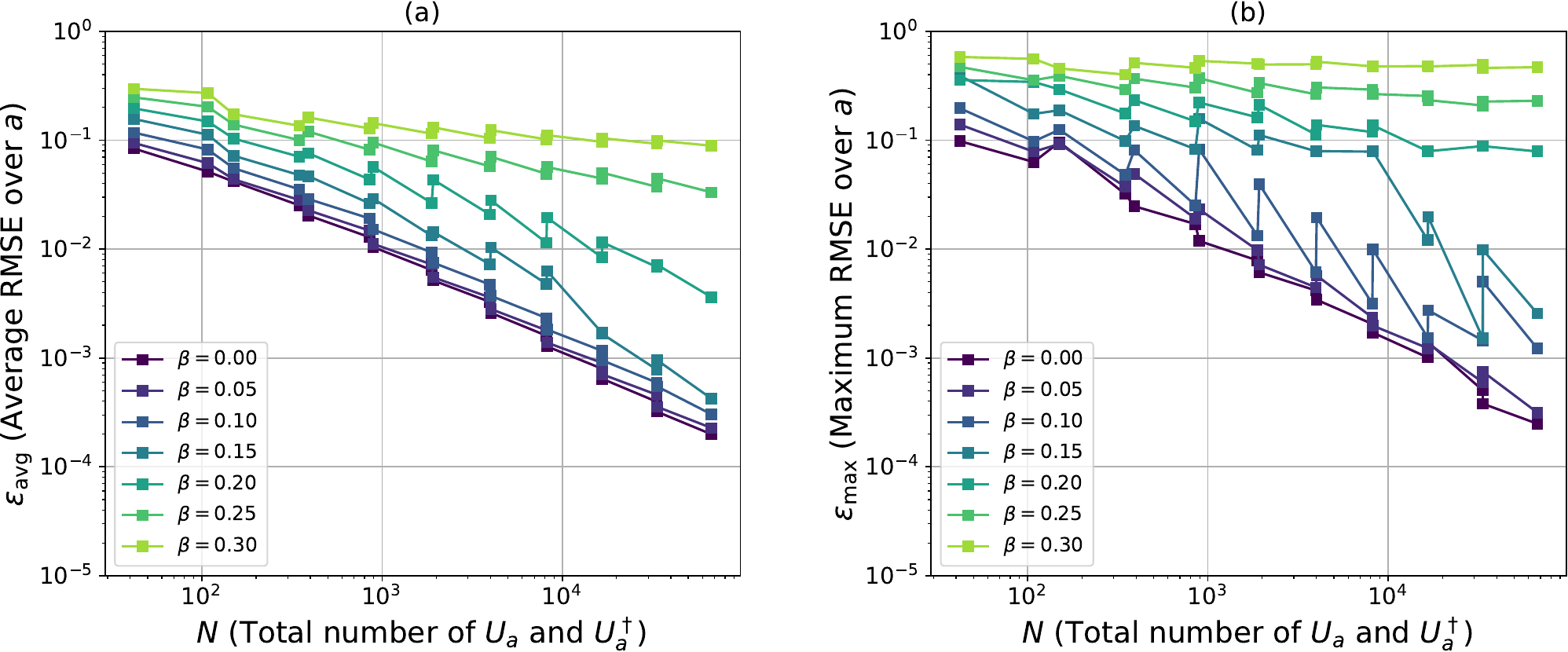}
    \caption{
        The relationship between the number of queries to $U_a$ and $U_a^{\dagger}$, and (a) the average value and (b) the maximum value of $\varepsilon$ over $a$, as the parameter $\beta$ varies. 
    }
    \label{fig_graph_query_comp_beta}
\end{figure}

Next, we considered the $L$ required to ensure $\beta \le 0.05$ in the full parallel case  (i.e. $T = 1$, $P = 2^{K-1}$).
We numerically evaluated the relationships between $L$ and $\beta$.
In this experiment, we fixed $T=1$, and evaluated $|\beta_{+, k}|$ and $|\beta_{i, k}|$ for $L \in \{ 4, 6, ..., 24 \}$ and $K \in \{1, 2, ..., 9 \}$.
$\beta_{+, k}$ and $\beta_{i, k}$ were computed via quantum circuit simulation using Qiskit~\cite{qiskit2024}, where the measurement probabilities $p_{+,k}$ and $p_{i,k}$ were estimated from 100000-shot measurements.
We calculated the angle sequence $\vec{\xi}$ using the Python library~\cite{chao2020finding, chao2020finding_github, chuang2021pyqsp_github, martyn2021grand}, as in the experiment described in the main text.
In Fig.~\ref{fig_graph_QSP-error}, $|\beta_{+, k}|$ and $|\beta_{i, k}|$ were computed as the maximum values over $a \in \{0.00, 0.01,  ..., 1.00\}$.
As shown in Fig.~\ref{fig_graph_QSP-error}, $|\beta_{+, k}|$ and $|\beta_{i, k}|$ decrease exponentially as $L$ increases for sufficiently large $L$.
This observation is consistent with the behavior predicted by Lemma~\ref{lm_operator_transformation} and the inequality $|\beta_{r, k}| \le \sqrt{2} P_k \max_j \big\lVert (V_{\varphi, T_k} - \widetilde{V}_{\varphi, T_k}) \ket{j}_b \ket{0}^{\otimes n}_s \big\rVert$ stated in the proof of Theorem~\ref{thm_PAE}.
Figure~\ref{fig_graph_QSP-error} also indicates that the condition $|\beta_{+,k}| \le 0.05$ and $|\beta_{i,k}| \le 0.05$, required to achieve accuracy comparable to the $\beta=0$ case in Fig.~\ref{fig_graph_query_comp_beta}, can be satisfied by an appropriate choice of $\{L_k\}$.
Specifically, for projective measurements in the basis $\{\ket{\pm_{P_k}}_b := (\ket{0}^{\otimes P_k}_b \pm \ket{1}^{\otimes P_k}_b)/\sqrt{2}\}$, we set $(L_1,\ldots,L_9)=(10,12,12,16,16,18,20,20,22)$, whereas for projective measurements in the basis $\{\ket{\pm i_{P_k}}_b := (\ket{0}^{\otimes P_k}_b \pm i\ket{1}^{\otimes P_k}_b)/\sqrt{2}\}$, we set $(L_1,\ldots,L_9)=(12,14,14,14,16,18,20,20,22)$.

\begin{figure}[H]
    \centering
    \includegraphics[width=\columnwidth]{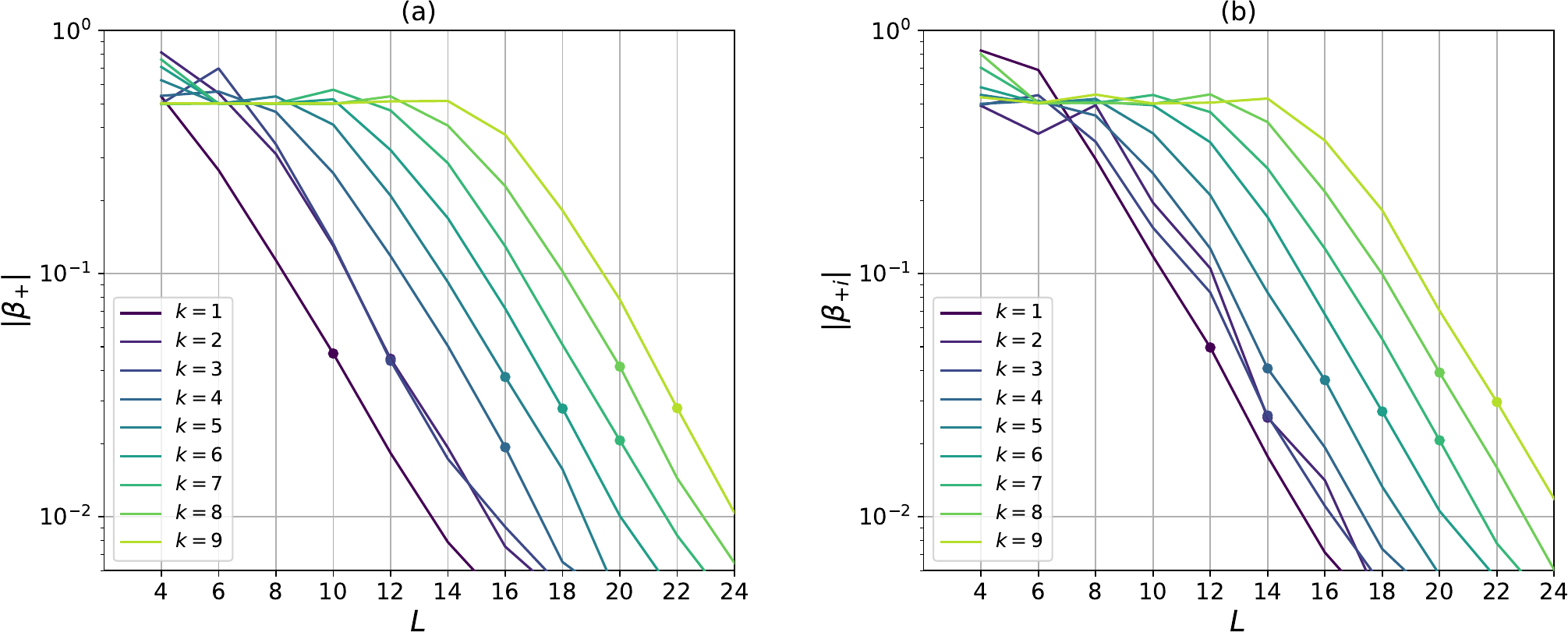}
    \caption{
        The relationship of $L \in \{ 2, 4, ..., 24 \}$ with (a) $|\beta_{+, k}|$ and (b) $|\beta_{i, k}|$, for $k \in \{1, 2, ..., 9\}$.
        Dots indicate the data points with the smallest $L$ that satisfy $|\beta_{+, k}| \le 0.05$ and $|\beta_{i, k}|  \le 0.05$ for each $k$.
    }
    \label{fig_graph_QSP-error}
\end{figure}

We also evaluate $L$ required to ensure $\beta \le 0.05$ in the full sequential case  (i.e. $P = 1$, $T = 2^{K-1}$).
According to Eq.~\eqref{eq_Vph_norm-error_1} and the inequality $|\beta_{r, k}| \le \sqrt{2} P_k \max_j \big\lVert (V_{\varphi, T_k} - \widetilde{V}_{\varphi, T_k}) \ket{j}_b \ket{0}^{\otimes n}_s \big\rVert$, the condition $\beta \le 0.05$ is fulfilled if $8\delta + \sqrt{16\delta - 64 \delta^2} \le 0.025$, which holds when $\delta < 3.813 \times 10^{-5}$.
From Eq.~\eqref{eq_delta_1}, this constraint on $\delta$ leads to the following condition on $L$:
\begin{align}
    \label{eq_T-L}
    \cfrac{4T^{L/2 + 1}}{2^{L/2 + 1}(L/2 + 1)!} < 3.813 \times 10^{-5}.
\end{align}
Figure~\ref{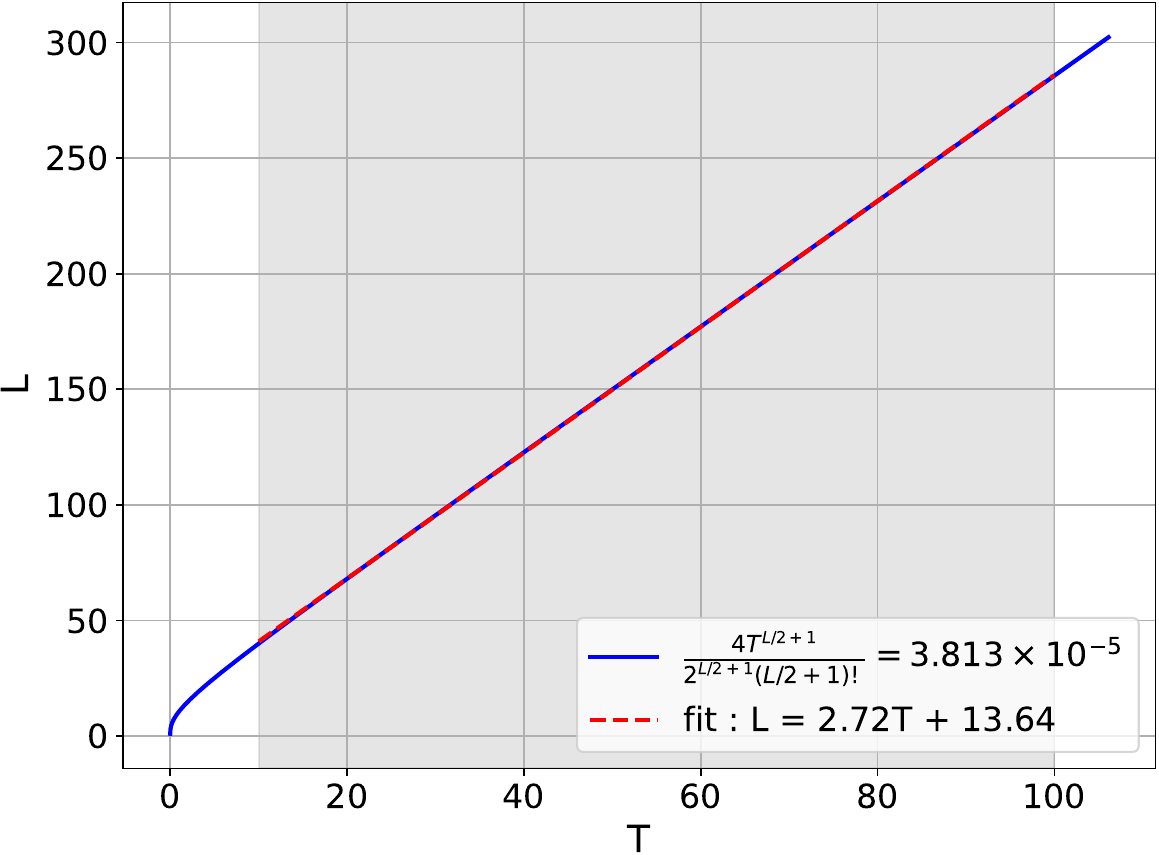} illustrates the $T$-$L$ relationship obtained by replacing the inequality sign '$<$' in Eq.~\eqref{eq_T-L} with the equality.
Based on this result, we use $(L_1, L_2, L_3, L_4) = (10, 14, 22, 34)$ in the numerical experiments described in the main text for the full sequential case.
In addition, as shown in Fig.~\ref{fig/graph_T-L.pdf}, a linear fit for $T \ge 10$ yields $L = 2.72 T + 13.64$.
Therefore, we set $L_k = 2\left\lceil (2.72 T_k + 13.64)/2 \right\rceil$ for $k \ge 5$.

\begin{figure}[H]
    \centering
    \includegraphics[width=0.6\columnwidth]{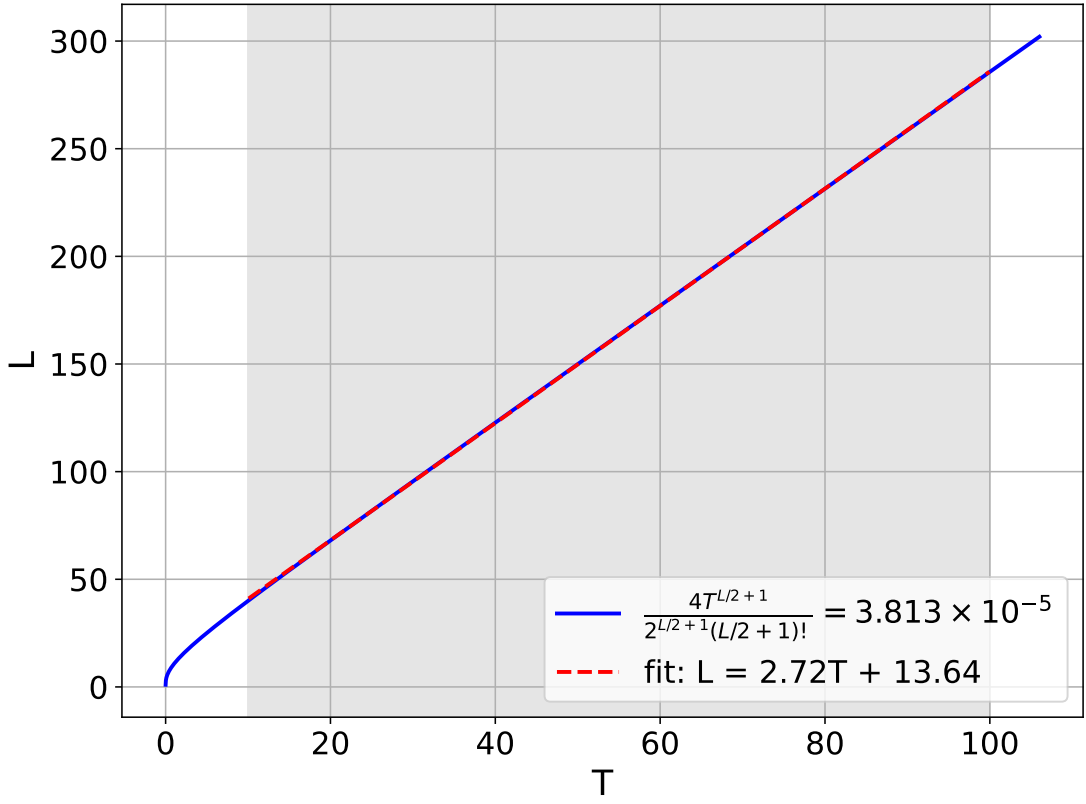}
    \caption{
        $T$-$L$ relationship derived by replacing the inequality in Eq.~\eqref{eq_T-L} with the equality.
        The blue line represents the resulting $T$-$L$ curve, while the red dashed line shows the linear fitting result for $10 \le T \le 100$.
    }
    \label{fig/graph_T-L.pdf}
\end{figure}

\section{The lower bound for parallel approximate counting}
\label{sec_parallel_adv_bound}

The main goal of this section is to prove Eq.~\eqref{Q in App} and its simplification 
Eq.~\eqref{eq:apdx_advlb_nontrivial}. 
For this purpose, we first give one of the basic results in quantum adversary method in Section~\ref{sec_parallel_adv_bound_thm}, which can be used for evaluating  the lower bound of the parallel query complexity (namely the minimal number (or depth) of parallel queries achieving the task).

\setcounter{theorem}{2}
\subsection{Parallel adversary lower bound for multi-valued functions}
\label{sec_parallel_adv_bound_thm}

\begin{theorem}[Parallel combinatorial adversary lower bound for multi-valued functions~\cite{burchard2019lower,jeffery2017optimal}]
    \label{thm_parallel_adv_bound}
    Let $\mathcal{X}$ and $\mathcal{Y}$ be sets of bit strings to a multi-valued function $F$ such that $F(x) \cap F(y) = \emptyset ~~ \forall x\in\mathcal{X},~\forall y\in\mathcal{Y}$, and let $P$ be a positive integer.
    For a relation $R \subseteq \mathcal{X} \times \mathcal{Y}$, we define $R^{i_1,\ldots,i_P} = \{ (x, y) \in R: \exists j \in \{1,\ldots,P\} ~~ {\rm s.t.} ~ x_{i_j} \ne y_{i_j}\}$, where $x_i\in \{0, 1\}$ denotes the $i$-th bit of $x$. Let us define $h,h',\ell,\ell'$ as
    \begin{gather}
        h := \min_{x \in \mathcal{X}} \left| \left\{ y\in\mathcal{Y}: (x, y) \in R \right\} \right|, ~~ h' := \min_{y \in \mathcal{Y}} \left| \left\{ x\in\mathcal{X}: (x, y) \in R \right\} \right|, \nonumber \\
        \ell := \max_{i_1,\ldots,i_P}\max_{x \in \mathcal{X}} \left| \left\{ y\in\mathcal{Y}: (x, y) \in R^{i_1,\ldots,i_P} \right\} \right|,~~\ell' := \max_{i_1,\ldots,i_P}\max_{y \in \mathcal{Y}} \left| \left\{ x\in\mathcal{X}: (x, y) \in R^{i_1,\ldots,i_P} \right\} \right|. \nonumber
    \end{gather}
    Then, for any quantum algorithm that computes an element of $F$ with high probability, the $P$-parallel query complexity is $
        \Omega\left(\sqrt{(hh')/(\ell \ell')}\right)$.
\end{theorem}

We provide a full proof of this theorem for completeness. 
According to the standard quantum adversary method~\cite{de2019quantum}, we introduce a binary oracle $O_x:\ket{i,b}\mapsto \ket{i,b\oplus x_i}$ ($b\in \{0,1\}$) for a bit string $x=x_1x_2...x_{\mathcal{N}}\in \{0,1\}^{\mathcal{N}}$. 
Then, the final quantum state $\ket{\psi_x^{T}}$ of any quantum algorithm with $T$ queries to $P$-parallel oracle $O_x^{\otimes P}$ can be described by
\begin{equation}
        \ket{\psi_x^{T}}=V_T(O_{x}^{\otimes P}\otimes \bm{1}_a)\cdots V_1(O_{x}^{\otimes P}\otimes \bm{1}_a)V_0\ket{0}^{\otimes P}\ket{0}_{\rm a}.
\end{equation}
Here, the number of the ``workspace'' ancilla qubits $\ket{0}_{\rm a}$ is arbitrary finite.
The unitary gates $V_1,...,V_T$ are independent of $x$. 
We also denote the quantum state after $t$ queries as $\ket{\psi_x^t}$. 
A rough idea of adversary method is to derive the necessary $T$ such that we can distinguish $\ket{\psi_x^T}$ and an adversary $\ket{\psi_y^T}$.

\begin{proof}[Proof of Theorem~\ref{thm_parallel_adv_bound}]
    First of all, we can bound the number $|R|$ of elements in $R$ as
    \begin{equation}
        |R|:=\sum_{x\in \mathcal{X}}\sum_{y\in \mathcal{Y}} \chi_R(x,y)=\sum_{x\in \mathcal{X}}|\{y\in \mathcal{Y}:(x,y)\in R\}|\geq h|\mathcal{X}|.
    \end{equation}
    Here, $\chi_R(x,y)$ is the indicator function such that if $(x,y)\in R$, then $\chi_R(x,y)=1$; otherwise, $\chi_R(x,y)=0$.
    Similarly, we obtain $|R|\geq h'|\mathcal{Y}|$. 
    As in the proof of the adversary method, we define progress $\Delta_t$ at $t$ as
    \begin{equation}
        \Delta_t:=\sum_{(x,y)\in R}|\langle \psi_x^t|\psi_y^t\rangle|,
    \end{equation}
    and evaluate the possible largest difference between $\Delta_{t}$ and $\Delta_{t+1}$.
    Let us define $I=(i_1,...,i_P)\in \{1,2,...,\mathcal{N}\}^P$.
    The quantum state $\ket{\psi_x^t}$ can be expanded as \begin{equation}
        \ket{\psi_x^t}=\sum_{I\in \{1,2,...,\mathcal{N}\}^P}\beta_I^{(x,t)}\ket{I}\ket{\phi_{I}^{(x,t)}},
    \end{equation} with some complex amplitudes $\beta^{(x,t)}_{I}$ and ancillary quantum states $\ket{\phi^{(x,t)}_{I}}$.
    Here, we note that there exists a unitary $U_{x,I}$ such that $O_x^{\otimes P}\ket{I}\ket{b_1,...,b_P}=\ket{I}U_{x,I}\ket{b_1,...,b_P}$; its action is the bit permutation $U_{x,I}\ket{b_1,...,b_P}=\ket{b_1\oplus x_{i_1},...,b_P\oplus x_{i_P}}$.
    Thus, a single $P$-parallel query changes the state $\ket{\psi_x^t}$ into
    \begin{equation}
        (O_x^{\otimes P}\otimes \bm{1}_a)\ket{\psi_x^t}=\sum_{I }\beta_I^{(x,t)}\ket{I}(U_{x,I}\otimes \bm{1}_{\rm a})\ket{\phi_{I}^{(x,t)}},
    \end{equation}
    which yields
    \begin{align}
        \left|\bra{\psi_x^{t+1}}\psi_y^{t+1}\rangle-\bra{\psi_x^{t}}\psi_y^{t}\rangle\right|&\leq  \sum_{I}\left|\overline{\beta^{(x,t)}_{I}}\beta^{(y,t)}_{I}\right|\left|\bra{\phi_{I}^{(x,t)}}\left(U_{x,I}^{\dagger}U_{y,I}\otimes \bm{1}_{\rm a}-\bm{1}\right)\ket{\phi_{I}^{(y,t)}}\right|\leq  2\sum_{I:x_I\neq y_I}|{\beta^{(x,t)}_{I}}||\beta^{(y,t)}_{I}|.
    \end{align}
    In the final inequality, we used the fact that if $x_{i_1}x_{i_2}...x_{i_P}=:x_I=y_I$, then $U_{x,I}^\dagger U_{y,I}$ becomes the identity; otherwise, $\|U_{x,I}^\dagger U_{y,I}-\bm{1}\|\leq 2$.
    By using this, we can bound the difference between $\Delta_t$ and $\Delta_{t+1}$: for any positive value $\gamma>0$,
    \begin{align}\label{eq:multi_val_F_parallel_diffprogress}
        \Delta_t-\Delta_{t+1}&= \sum_{(x,y)\in R}\left(|\langle \psi_x^{t}|\psi_y^{t}\rangle|-|\langle \psi_x^{t+1}|\psi_y^{t+1}\rangle|\right)\leq \sum_{(x,y)\in R} \sum_{I:x_I\neq y_I}2|{\beta^{(x,t)}_{I}}||\beta^{(y,t)}_{I}|\notag\\
        &\leq \sum_{(x,y)\in R} \sum_{I:x_I\neq y_I}\left(\gamma|{\beta^{(x,t)}_{I}}|^2+\frac{1}{\gamma}|\beta^{(y,t)}_{I}|^2\right),
    \end{align}
    where we used the AM-GM inequality for positive values $\gamma|{\beta^{(x,t)}_{I}}|^2$ and $\frac{1}{\gamma}|\beta^{(y,t)}_{I}|^2$.

    Hereafter, we evaluate the sums in Eq.~\eqref{eq:multi_val_F_parallel_diffprogress}.
    \begin{align}
        &\sum_{(x,y)\in R}\sum_{I:x_I\neq y_I}|{\beta^{(x,t)}_{I}}|^2=\sum_{I}\sum_{(x,y)\in R}\chi_{x_I\neq y_I}(I)|{\beta^{(x,t)}_{I}}|^2=
        \sum_{I}\sum_{(x,y)\in R:x_I\neq y_I}|{\beta^{(x,t)}_{I}}|^2=\sum_{I}\sum_{(x,y)\in R^{I}}|{\beta^{(x,t)}_{I}}|^2\notag\\
        &~~~~~=\sum_{I}\sum_{x,y}\chi_{R^{I}}(x,y)|{\beta^{(x,t)}_{I}}|^2=\sum_{I}\sum_{x\in \mathcal{X}}\left(|{\beta^{(x,t)}_{I}}|^2\sum_{y\in \mathcal{Y}}\chi_{R^{I}}(x,y)\right)\leq\sum_{I}\sum_{x\in \mathcal{X}}\ell|{\beta^{(x,t)}_{I}}|^2=\ell|\mathcal{X}|,
    \end{align}
    where we shorten $R^{i_1,...,i_P}$ to $R^{I}$.
    Similarly,
    \begin{align}
        &\sum_{(x,y)\in R}\sum_{I:x_I\neq y_I}|{\beta^{(y,t)}_{I}}|^2=\sum_{I}\sum_{(x,y)\in R}\chi_{x_I\neq y_I}(I)|{\beta^{(y,t)}_{I}}|^2=
        \sum_{I}\sum_{(x,y)\in R:x_I\neq y_I}|{\beta^{(y,t)}_{I}}|^2=\sum_{I}\sum_{(x,y)\in R^{I}}|{\beta^{(y,t)}_{I}}|^2\notag\\
        &~~~~~=\sum_{I}\sum_{x,y}\chi_{R^{I}}(x,y)|{\beta^{(y,t)}_{I}}|^2=\sum_{I}\sum_{y\in \mathcal{Y}}\left(|{\beta^{(y,t)}_{I}}|^2\sum_{x\in \mathcal{X}}\chi_{R^{I}}(x,y)\right)\leq\sum_{I}\sum_{y\in \mathcal{Y}}\ell'|{\beta^{(y,t)}_{I}}|^2=\ell'|\mathcal{Y}|.
    \end{align}
    These evaluations yield
    \begin{equation}
        \Delta_t-\Delta_{t+1}\leq \gamma \ell|\mathcal{X}|+\frac{1}{\gamma}\ell'|\mathcal{Y}|
        \leq \gamma \frac{\ell}{h}|R|+\frac{1}{\gamma}\frac{\ell'}{h'}|R|,
    \end{equation}
    where we used $|R|\geq h|\mathcal{X}|$ and $|R|\geq h'|\mathcal{Y}|$.
    Since this inequality holds for any $\gamma>0$, we minimize the upper bound by taking $\gamma=\sqrt{h\ell'/(h'\ell)}$.
    As a result,
    \begin{equation}
        \Delta_t-\Delta_{t+1}\leq 2\sqrt{\frac{\ell\ell'}{hh'}}|R|.
    \end{equation}

    Computing an element of the set $F(x)$ with a success probability at least $1-\delta$ requires $\bra{\psi_x^T}\Pi_{F(x)}\ket{\psi_x^T}\geq 1-\delta$ for an orthogonal projector $\Pi_{F(x)}$ onto the subspace corresponding to the possible values of $F(x)$.
    When $F(x)\cap F(y)=\emptyset$, it implies $\Pi_{F(x)}\Pi_{F(y)}=0$. 
    Thus, 
    \begin{align}
        \sqrt{1-|\bra{\psi_x^{T}}\psi_y^{T}\rangle|^2}&\geq \frac{1}{2}\|\ket{\psi_x^{T}}\bra{\psi_x^{T}}-\ket{\psi_y^{T}}\bra{\psi_y^{T}}\|_{1}\notag\\
        &\geq {\rm tr}[\Pi_{F(x)}(\ket{\psi_x^{T}}\bra{\psi_x^{T}}-\ket{\psi_y^{T}}\bra{\psi_y^{T}})]\geq 1-\delta- {\rm tr}[\Pi_{F(x)}\ket{\psi_y^{T}}\bra{\psi_y^{T}}]\notag\\
        &\geq 1-2\delta,
    \end{align}
    and 
    $|\bra{\psi_x^{T}}\psi_y^{T}\rangle|\leq 2\sqrt{\delta(1-\delta)}\equiv c$,
    where we used
    \begin{equation}
        0\leq{\rm tr}[\Pi_{F(x)}\ket{\psi_y^{T}}\bra{\psi_y^{T}}]={\rm tr}[(1-\Pi_{\overline{F(x)\cup F(y)}}-\Pi_{F(y)})\ket{\psi_y^{T}}\bra{\psi_y^{T}}]\leq \delta.
    \end{equation}
    Therefore,
    \begin{equation}
        (1-c)|R|\leq |R|-\Delta_T=\Delta_0-\Delta_T= \sum_{t=0}^{T-1}\left(\Delta_t-\Delta_{t+1}\right)\leq 2\sqrt{\frac{\ell\ell'}{hh'}}|R|T,
    \end{equation}
    and we finally arrive at $T\geq \frac{1-c}{2}\sqrt{\frac{hh'}{\ell\ell'}}$, which completes the proof.
\end{proof}

\subsection{Remarks on~Ref.~\cite{burchard2019lower}}
\label{sec_remarks_buchardpaper}

We now see that the lower bound derived in Ref.~\cite{burchard2019lower} has an error in its derivation. 
Theorem 3 in Ref.~\cite{burchard2019lower} argues that for an approximate counting problem with a relative error $\varepsilon_{\rm rel}$, any quantum algorithm with $P$-parallel queries has query depth $\Omega(\varepsilon^{-1}_{\rm rel}\cdot \sqrt{N_d/(PN_t)})$, where $N_t$ ($\neq 0$) denotes the number of marked items in a size-$N_d$ database. 
This was derived from the above Theorem~\ref{thm_parallel_adv_bound} (Theorem 2 in Ref.~\cite{burchard2019lower}) by calculating the factors $h,h',\ell,\ell'$ in the approximate counting problem.
However, we have identified that the evaluation of the parameter $\ell$ has been underestimated, thereby leading to an overly strong lower bound.

More precisely, Ref.~\cite{burchard2019lower} considers the following setup for the parallel quantum adversary method:
\begin{itemize}
            \item $\varepsilon_{\rm rel}$: a relative error $\varepsilon_{\rm rel}\in (0,1)$.
            \item $F(x)$: a multi-valued function for approximate counting satisfying $F(x)= \{z\in\mathbb{R}:|z-|x|_1|\leq \varepsilon_{\rm rel}|x|_1/3\}$, where $|\cdot|_1$ denotes the $\ell_1$ norm.
            \item $\mathcal{X}$: a set of length-$N_d$ bit strings $x\in\{0, 1 \}^{N_d}$ that have exactly $N_t$ ones.
            \item $\mathcal{Y}$: a set of length-$N_d$ bit strings $y\in\{0, 1 \}^{N_d}$ that have $N_t+\lceil \varepsilon_{\rm rel}N_t\rceil$ ($\leq N_d$) ones.
            \item $R$: a set of the pair $(x,y)\in \mathcal{X}\times \mathcal{Y}$, defined as $R := \{(x,y)\in \mathcal{X}\times \mathcal{Y}: x \le y\}$, where $x \le y$ means that for every index $i$, if $x_i = 1$, then $y_i = 1$.
            \item $R^{i_1...i_P}$: a subset of $R$, defined as $R^{i_1...i_P} := \{(x,y) \in R: \exists j\in \{1,...,P\}~{\rm s.t.} \: x_{i_j} \ne y_{i_j} \}$, where $i_j \in \{1, ..., N_d\}$ denotes the coordinate of a bit string.
\end{itemize}
Note that while
the original paper defines $\mathcal{Y}:=\{y:|y|_1=N_t+\varepsilon_{\rm rel }N_t\}$, we use the above definition with the ceil function to well-define $\mathcal{Y}$. 
Also, we need to take $\varepsilon_{\rm rel}/3$ in $F(x)$ as above, instead of $\varepsilon_{\rm rel}/2$ in the original paper; otherwise, the condition $F(x)\cap F(y)=\emptyset$ may be failed when $\varepsilon_{\rm rel}<1$.
Under these definitions, in~\cite{burchard2019lower}, 
the previous lower bound $\Omega(\varepsilon^{-1}_{\rm rel}\cdot \sqrt{N_d/(PN_t)})$ was derived by calculating $\ell$ as follows;
\begin{equation}\label{eq:wrong_evaluation}
    \ell= \binom{N_d-N_t-1}{\varepsilon_{\rm rel} N_t-1}~~~\mbox[\rm wrong]
\end{equation}
for (at least) $\varepsilon_{\rm rel}> 1/N_t$.
However, we have found that the factor $\ell$ can become larger than this value.
Our careful evaluation clarifies
\begin{align}\label{reply_corrected_l}
    \ell &= \left\{
            \begin{array}{ll}
            \dbinom{N_d - N_t}{\lceil \varepsilon_{\rm rel}N_t\rceil} - \dbinom{N_d - N_t - P}{\lceil \varepsilon_{\rm rel}N_t\rceil} & ( P \leq N_d - N_t -\lceil \varepsilon_{\rm rel}N_t\rceil) \\[10pt]
            \dbinom{N_d - N_t}{\lceil \varepsilon_{\rm rel}N_t\rceil} & (\mbox{otherwise}).
            \end{array}
            \right.
\end{align}
Indeed, even if $P=2$, Eq.~\eqref{eq:wrong_evaluation} and Eq.~\eqref{reply_corrected_l} are not the same: 
\begin{equation}
     \binom{N_d-N_t-1}{\varepsilon_{\rm rel} N_t-1}=\dbinom{N_d - N_t}{\varepsilon_{\rm rel}N_t} - \dbinom{N_d - N_t - 1}{\varepsilon_{\rm rel}N_t}<\dbinom{N_d - N_t}{\varepsilon_{\rm rel}N_t} - \dbinom{N_d - N_t - P}{\varepsilon_{\rm rel}N_t}
\end{equation}
when $\varepsilon_{\rm rel}N_t\in \mathbb{Z}$.
In the following, we derive the corrected adversary lower bound together with the derivation of Eq.~\eqref{reply_corrected_l}.

\subsection{Proof of the bounds presented in Appendix~\ref{sec_parallel_query_counting}}
\label{sm_sec:proofapdxc}
We now provide the proof of the lower and upper bounds presented in Appendix~\ref{sec_parallel_query_counting}.
These bounds are summarized in the following lemma.
\begin{lemma}[Lower bound for parallel approximate counting]\label{lem:corrected_ADVLB_approxcount}
    Let us consider a size-$N_d$ database with $N_t$ marked items such that $N_t+\lceil \varepsilon_{\rm rel}N_t\rceil\leq N_d$ for a relative error $\varepsilon_{\rm rel}\in  (0,1)$.
    (This is the same assumption as in Ref.~\cite{burchard2019lower}.)
    Then, for any quantum algorithm solving the approximate counting problem with high probability, the lower bound of the $P$-parallel query complexity is 
    \begin{equation}\label{eq:ADV_lb_corrected}
        \sqrt{\frac{hh'}{\ell\ell'}}=\left[1-\frac{\dbinom{N_d - N_t - P}{\lceil\varepsilon_{\rm rel}N_t\rceil}}{\dbinom{N_d - N_t}{\lceil\varepsilon_{\rm rel}N_t\rceil}}\right]^{-1/2}\left[1-\frac{\dbinom{N_t+\lceil\varepsilon_{\rm rel}N_t\rceil - P}{\lceil\varepsilon_{\rm rel}N_t\rceil}}{\dbinom{N_t+\lceil\varepsilon_{\rm rel}N_t\rceil}{\lceil\varepsilon_{\rm rel}N_t\rceil}}\right]^{-1/2}
    \end{equation}
    up to a constant factor, where we define $\binom{n}{r}=0$ if $n<r$.
    Furthermore, the following upper bound always holds:
    \begin{equation}\label{eq:consistent_with_PAE}
        \sqrt{\frac{hh'}{\ell\ell'}}=\mathcal{O}\left[{\rm PAE}^{P}\left(\frac{\varepsilon_{\rm rel}}{N_d/N_t}\right)\right],~~\mbox{where}~~{\rm PAE}^{P}(\varepsilon)=\mathcal{O}\left(\frac{1}{\varepsilon P}+\log P\right).
    \end{equation}
    For a nontrivial regime where all the binomial coefficients does not vanish, the $P$-parallel query complexity is  
    \begin{equation}
        \Omega\left(\frac{1}{P}\frac{N_t}{\lceil \varepsilon_{\rm rel}N_t\rceil}\sqrt{\frac{N_d-N_t(1+\varepsilon_{\rm rel})}{N_t}}\right).
    \end{equation}
\end{lemma}

\begin{proof}
    Let us consider the same setup described in Sec.~\ref{sec_remarks_buchardpaper}.
    We note that any quantum algorithm solving the approximate counting problem with a relative error $\varepsilon_{\rm rel}/3$ can compute an element of $F(x)$ with a high success probability.
    Therefore, it follows that the lower bound of the $P$-parallel query complexity of approximate counting is $\Omega(\sqrt{hh'/(\ell\ell')})$ by Theorem~\ref{thm_parallel_adv_bound} for the current setup.
    We then evaluate $(h,h',\ell,\ell')$ defined in Theorem~\ref{thm_parallel_adv_bound} as follows.
    For simplicity, we define $m:=\lceil \varepsilon_{\rm rel}N_t\rceil\geq 1$.
\begin{itemize}
    \item $h := \min_{x \in \mathcal{X}} \left| \left\{ y\in \mathcal{Y}: (x, y) \in R \right\} \right|$. \\
        Fix $x \in \mathcal{X}$ arbitrarily.
        Any $y \in \mathcal{Y}$ satisfying $(x, y)\in R$ is constructed by choosing $m$ additional 1's among the $N_d - N_t$ 0-positions of $x$.
        Thus, $\left| \left\{ y\in\mathcal{Y}: (x, y) \in R \right\} \right| = \binom{N_d - N_t}{m}$, which is independent of $x$.
        Therefore, 
        \begin{align}
            \label{eq_h_supp}
            h = \binom{N_d - N_t}{m}.
        \end{align}
    \item $h' := \min_{y \in \mathcal{Y}} \left| \left\{ x\in\mathcal{X}: (x, y) \in R \right\} \right|$. \\
        Fix $y \in \mathcal{Y}$ arbitrarily.
        An $x \in \mathcal{X}$ satisfies $(x, y)\in R$ iff $x$ is obtained by selecting $N_t$ 1's among the $N_t+m$ 1-positions of $y$.
        Thus, $\left| \left\{ x\in\mathcal{X}: (x, y) \in R \right\} \right| = \binom{N_t+m}{N_t}$, which is also independent of $y$.
        Therefore, 
        \begin{align}
            \label{eq_h'_supp}
            h' = \binom{N_t+m}{N_t}=\binom{N_t+m}{m}.
        \end{align}
    \item $\ell := \max_{i_1,\ldots,i_P}\max_{x \in \mathcal{X}} \left| \left\{ y\in\mathcal{Y}: (x, y) \in R^{i_1,\ldots,i_P} \right\} \right|$. \\
        Fix $x \in \mathcal{X}$ and a set of $P$-parallel query indices $I = \{i_1, \ldots, i_P\}$ arbitrarily.
        A pair $(x,y)\in R$ belongs to $R^I$ iff at least one of the different $m$ bits between $x$ and $y$ is in $I$.
        Equivalently, a pair $(x,y)\in R$ is not in $R^I$ iff all such $m$ bits are in $\{i:x_i=0\}\setminus I^{(x)}$, where $I^{(x)}:=I\cap \{i:x_i=0\}$.
        Thus, it is clear that the number of possible $y$ is maximized when all indices in $I$ are in $\{i:x_i=0\}$ and different.
        In this case, $|\{i:x_i=0\}\setminus I^{(x)}|=N_d-N_t-P$.
        If $N_d-N_t-P\geq m$, the number of $y$ such that $(x,y)$ is in $R$ but not in $R^I$ is equal to
        $\binom{N_d-N_t-P}{m}$,
        since we may choose all $m$ added positions from $|\{i:x_i=0\}\setminus I|=N_d-N_t-P$.
        If $N_d-N_t-P< m$, it is impossible to add all $m$ 1's to $x$ at positions $\{i:x_i=0\}\setminus I$; any pair $(x,y)\in R$ belongs to $R^I$.
        Therefore,
        \begin{align}
            \label{eq_ell_supp}
            \ell = \left\{
            \begin{array}{ll}
            h - \dbinom{N_d - N_t - P}{m} & ( P \leq N_d - N_t -m) \\
            h & (\mbox{otherwise}).
            \end{array}
            \right.
        \end{align}       
    \item $\ell' := \max_{i_1,\ldots,i_P}\max_{y \in \mathcal{Y}} \left| \left\{ x\in\mathcal{X}: (x, y) \in R^{i_1,\ldots,i_P} \right\} \right|$. \\
        Fix $y \in \mathcal{Y}$ and a set of $P$-parallel query indices $I = \{i_1, \ldots, i_P\}$ arbitrarily.
        According to the relation $R$, a disagreement $x_i \ne y_i$ can only occur when $y_i = 1$ and $x_i = 0$.
        Therefore, in order to maximize $\left| \left\{ x\in \mathcal{X}: (x, y) \in R^{I} \right\} \right|$, we choose as many indices of $I$ as possible from the 1-positions of $y$.
        If $P \le N_t$, $\left| \left\{ x\in\mathcal{X}: (x, y) \in R \setminus R^{I} \right\} \right|$ is equal to the number of ways to assign the $N_t-P$ 1's on the $N_t +m- P$ candidate indices after fixing $x_{i_1}=\cdots=x_{i_P}=1$.
        Thus, $\max_{I} \left| \left\{ x\in\mathcal{X}: (x, y) \in R^{I} \right\} \right| = h' - \binom{N_t+m - P}{N_t-P}$ in this case.
        Additionally, if $N_t < P$, we can choose $I$ to be a subset of the 1's positions of $y$ with $|I|=P$ or to be a set including all 1's positions of $y$, and no string $x\in\mathcal{X}$ can satisfy $x_i=1$ for all $i\in I$, hence all $x$ which satisfies $(x,y)\in R$ differs from $y$ on at least one queried index.
        Thus, $\left| \left\{ x\in\mathcal{X}: (x, y) \in R \setminus R^{I} \right\} \right| = 0$ and $\max_{I} \left| \left\{ x\in\mathcal{X}: (x, y) \in R^{I} \right\} \right| = h'$.
        In both cases, $\max_{I}\left|\{x\in\mathcal{X}:(x,y)\in R^{I}\}\right|$ is independent of $y$.
        Therefore, 
        \begin{align}
            \label{eq_ell'_supp}
            \ell' &= \left\{
            \begin{array}{ll}
            h' - \dbinom{N_t+m - P}{N_t-P} & ( P \le N_t ) \\
            h' & (\mbox{otherwise})
            \end{array}
            \right.= \left\{
            \begin{array}{ll}
            h' - \dbinom{N_t+m - P}{m} & ( P \le N_t ) \\
            h' & (\mbox{otherwise}).
            \end{array}
            \right.
        \end{align}
\end{itemize}
Thus, we complete the proof of Eq.~\eqref{eq:ADV_lb_corrected}.

To evaluate the upper and lower bounds of $\sqrt{hh'/(\ell\ell')}$, we here simplify the factors $\ell/h$ and $\ell'/h'$ for the nontrivial regime as 
\begin{align}
    \frac{\ell}{h}&\equiv 1-\frac{\dbinom{N_d - N_t - P}{m}}{\dbinom{N_d - N_t}{m}}=1-\prod_{i=0}^{m-1} \frac{N_d-N_t-P-i}{N_d-N_t-i}=1-\prod_{i=0}^{m-1} \left(1-\frac{P}{N_d-N_t-i}\right),
\end{align}
\begin{align}
    \frac{\ell'}{h'}&\equiv 1-\frac{\dbinom{N_t+m - P}{m}}{\dbinom{N_t+m}{m}}
    =1-\prod_{i=0}^{m-1}\frac{N_t+m-P-i}{N_t+m-i}=1-\prod_{i=0}^{m-1}\left(1-\frac{P}{N_t+m-i}\right).
\end{align}
Now, we use the following inequalities for any $x_i\in [0,1)$
\begin{equation}
    \frac{\sum_i x_i}{1+\sum_i x_i}\leq 1-\prod_{i=0}^{m-1}(1-x_i)\leq \sum_i x_i,
\end{equation}
which can be proved by mathematical induction.
Hence, when $N_d-N_t-P\geq m$ (equivalently, $N_d-N_t-m\geq P$), $P/(N_d-N_t-i)\in [0,1)$ holds for any $i=0,1,...,m-1$ and we have
\begin{equation}
    \left(1+\frac{N_d-N_t}{mP}\right)^{-1}\leq \frac{\ell}{h}\leq \frac{mP}{N_d-N_t-m+1}.
\end{equation}
Similarly, when $P\leq N_t$, $P/(N_t+m-i)\in [0,1)$ holds for any $i=0,1,...,m-1$ and we have
\begin{equation}
    \left(1+\frac{N_t+m}{mP}\right)^{-1}\leq \frac{\ell'}{h'}\leq \frac{mP}{N_t+1}.
\end{equation}
These evaluations immediately yield the lower bound of the $P$-parallel query complexity for the nontrivial regime 
\begin{equation}
    \Omega\left(\sqrt{\frac{hh'}{\ell\ell'}}\right)= \Omega\left(\frac{1}{P}\frac{N_t}{\lceil \varepsilon_{\rm rel}N_t\rceil}\sqrt{\frac{N_d-N_t(1+\varepsilon_{\rm rel})}{N_t}}\right).
\end{equation}

Finally, we confirm Eq.~\eqref{eq:consistent_with_PAE} in all the four regimes: (i) the nontrivial regime $P\leq N_d-N_t-m$ and $P\leq N_t$, (ii) $P\leq N_d-N_t-m$ and $P> N_t$, (iii) $P> N_d-N_t-m$ and $P\leq N_t$, and (iv) the trivial regime $P>N_d-N_t-m$ and $P> N_t$.
\begin{itemize}
    \item[(i)] the nontrivial regime $P\leq N_d-N_t-m$ and $P\leq N_t$
    \begin{align}
        \sqrt{\frac{hh'}{\ell\ell'}}&\leq \left(1+\frac{N_d-N_t}{mP}\right)^{1/2}\left(1+\frac{N_t+m}{mP}\right)^{1/2}\leq 1+\frac{N_d+m}{2mP}&&\text{(the AM-GM inequality)}\notag\\
        &\leq  1+\frac{N_d}{2\varepsilon_{\rm rel}PN_t}+\frac{1}{2P}=\mathcal{O}\left(\frac{1}{\varepsilon_{\rm rel}P}\frac{N_d}{N_t}+\log P\right).
    \end{align}

    \item[(ii)] $P\leq N_d-N_t-m$ and $P> N_t$
    \begin{align}
        \sqrt{\frac{hh'}{\ell\ell'}}\leq \left(1+\frac{N_d-N_t}{mP}\right)^{1/2}\leq 1+\sqrt{\frac{N_d-N_t}{mP}}\leq 1+\sqrt{\frac{N_d}{\varepsilon_{\rm rel}PN_t}}=\mathcal{O}\left(\frac{1}{\varepsilon_{\rm rel}P}\frac{N_d}{N_t}+\log P\right).
    \end{align}

    \item[(iii)] $P> N_d-N_t-m$ and $P\leq N_t$
    \begin{align}
        \sqrt{\frac{hh'}{\ell\ell'}}\leq \left(1+\frac{N_t+m}{mP}\right)^{1/2}\leq 1+\sqrt{\frac{N_t+m}{mP}}\leq 1+\mathcal{O}\left(\sqrt{\frac{N_t}{\varepsilon_{\rm rel}N_t P}}\right)=\mathcal{O}\left(\frac{1}{\varepsilon_{\rm rel}P}\frac{N_d}{N_t}+\log P\right).
    \end{align}
    \item[(iv)] the trivial regime $P>N_d-N_t-m$ and $P> N_t$
    \begin{equation}
        \sqrt{\frac{hh'}{\ell\ell'}}=1=\mathcal{O}\left(\frac{1}{\varepsilon_{\rm rel}P}\frac{N_d}{N_t}+\log P\right).
    \end{equation}
\end{itemize}
\end{proof}


\end{document}